\documentclass{article}
\usepackage{fullpage}
\usepackage{graphicx}
\usepackage{amsmath,amsfonts,amssymb}
\usepackage{amsthm}
\usepackage{physics}

\usepackage[pdfstartview=FitH,colorlinks,linkcolor=blue,citecolor=blue]{hyperref}

\theoremstyle{plain} 
\newtheorem{theorem}{Theorem}

\newtheorem{definition}[theorem]{Definition}
\newtheorem{proposition}[theorem]{Proposition}
\newtheorem{remark}[theorem]{Remark}
\newtheorem{lemma}[theorem]{Lemma}
\newtheorem{corollary}[theorem]{Corollary}
\newtheorem{conjecture}[theorem]{Conjecture}

\newcommand{\Deal}{{\mathsf{Deal}}}
\newcommand{\Recon}{{\mathsf{Recon}}}

\title{Quantum Secret Sharing and Error Correction vs No-Cloning}
\author{Steven Chien\\Chapter \and Ishani Mukherjee\\Stanford University \and Mark Zhandry\\Stanford University}
\date{}

\begin{document}

\maketitle

\begin{abstract}Secret sharing is ubiquitous throughout cryptography. All possible access structures are classically feasible, and in the case of threshold access structures, the protocols are even very efficient. However, when moving to the quantum setting, the no-cloning theorem shows that many access structures are impossible. In fact, no-cloning exactly characterizes which access structures are possible or impossible. For thresholds, quantum secret sharing is possible if and only if the threshold $t$ is strictly more than $n/2$.

In this work, we propose a variant of quantum secret sharing (QSS) where two or more identical copies of the input state are provided, but the output is only required to recover one copy. This notion circumvents the simple one-copy no-cloning obstruction, though the natural $k$-copy generalization still gives much milder obstructions, depending on how many copies are provided. For thresholds using $k$ copies, no-cloning implies that QSS is impossible whenever $t\leq n/(k+1)$.

Given that no-cloning exactly characterizes the one-copy case, it is tempting to hypothesize that no-cloning continues to exactly characterize the many-copy case. However, we show that this is not the case. We give positive results showing that multiple copies allow for going \emph{slightly} beyond the single-copy no-cloning obstruction. For thresholds, we construct QSS whenever $t>(n-k+1)/2$. On the other hand, we give a novel obstruction we call the Clique Path obstruction, which applies to arbitrary access structures. For thresholds, our obstruction shows that QSS is impossible whenever $t\leq (n-1)/k$. For all access structures, our upper and lower bounds exactly match for $k=2$. Both our results leverage connections to the $k$-colorability of certain graphs derived from the access structure. We leave closing the gap for $k\geq 3$ copies as a fascinating direction for future work.

Secret sharing is closely related to error correction, which can also be considered in the many-copy setting. Our results imply similar obstructions for quantum error correction for erasure channels.
\end{abstract}

\section{Introduction}

Quantum information fundamentally challenges many concepts from information theory and computation. On one hand, computers harnessing quantum physics promise significant computational speedups for certain problems~\cite{shor1997polynomial,Grover96}. On the other hand, the delicate and counter-intuitive nature of quantum states makes various tasks involving quantum information more challenging than their classical counterparts.

In this work, we focus on a concept called secret sharing~\cite{Shamir79}, which is closely related to certain kinds of error-correcting codes. Secret sharing allows for sharing a secret across several users, such that certain authorized sets of users can recover the secret, but unauthorized sets learn nothing about the secret. The collection of authorized sets is called the \emph{access structure}. Error correction allows for encoding information, such that small errors to the encoding can be corrected. Both secret sharing and error correction have been extensively studied in both the classical and quantum world, where in the quantum case one is interested in sharing/encoding quantum information. We focus exclusively on feasibility, and do not consider efficiency.

\paragraph{Quantum vs Classical Secret Sharing: The No-cloning Obstruction.} For \emph{classical} secret sharing (SS), \cite{ito1987secret} showed that secrets can be (inefficiently) shared to \emph{any} access structure $\Gamma$\footnote{A necessary and sufficient condition of being an access structure is that the collection of authorized user sets is monotone: adding a user to an authorized set results in a still-authorized set.}. An important special case is \emph{threshold} secret sharing (TSS), where any set of users of size at least a threshold $t$ can recover the secret. For threshold secret sharing, efficient schemes are possible for all numbers of users $n$ and thresholds $t$ such that $1\leq t\leq n$~\cite{Shamir79}.

Quantum secret sharing (QSS) and quantum threshold secret sharing (QTSS) are the natural analogs where the secret and protocol are now quantum. Unfortunately, the no-cloning theorem imposes a strong impossibility for various access structures/thresholds. In particular, if there are two disjoint authorized user sets in the access structure $\Gamma$, then there is no QSS scheme for $\Gamma$~\cite{cleve1999share,gottesman_solo}. Indeed, if such a QSS scheme existed, then we could clone any state by sharing it, and independently reconstructing it twice from the shares of the two authorized user sets. Specializing to thresholds, this means that QTSS is impossible if $t\leq n/2$. \cite{cleve1999share,gottesman_solo} complement these negative results by showing a feasibility result for any access structure $\Gamma$ other than those ruled out above; in particular, $t$-out-of-$n$ QTSS is feasible whenever $t>n/2$. Thus, the no-cloning theorem tightly characterizes what is feasible for quantum secret sharing.

Similarly, for erasure codes that correct lost registers, classical codes can handle an arbitrary constant fraction of erasures~\cite{reed1960polynomial} (using an appropriate rate), whereas the no-cloning theorem shows that the fraction of erasures for quantum codes must be strictly less than $1/2$~\cite{Grassl1997Codes}. No-cloning barriers also exist for correcting quantum Pauli errors; see Section~\ref{sec:discuss} for a more thorough discussion.

\paragraph{This work.} The limitations imposed by no-cloning lead us to ask the following natural question:
\begin{center}{\it 
Can the no-cloning obstruction be overcome by\\
using multiple copies of the state as input?}\end{center}

Intuitively, if provided two copies, then getting two copies of the state from the disjoint authorized sets would not violate the no-cloning theorem. 
Our work therefore initiates the study of quantum secret sharing and quantum error correction when multiple copies of the input are provided.

\subsection{Results Summary} 

We focus first on secret sharing, and then discuss the connection to error correction at the end of the overview.

We first fix some notation. In a $((n,t))$-QTSS scheme, the dealer takes as input $\ket{\psi}$, and produces $n$ shares of the secret, which are distributed to different users. Then any $t$ of the users can reconstruct the secret $\ket{\psi}$, while any fewer than $t$ users learn nothing about the secret. We generalize this concept to a $((n,t,k))$-QTSS scheme, where the dealer takes as input $k$ copies of $\ket{\psi}$. We only ask that a set of $t$ users recover at least one copy of $\ket{\psi}$; they do not need to recover all $k$ copies. 

We also consider QSS for general access structures $\Gamma$, in which case we use the notation $((n, \Gamma))$-QSS or $((n,\Gamma,k))$-QSS if the dealer is given $k$ copies of the secret $\ket{\psi}$.

\paragraph{The Generalized No-Cloning Obstruction.} We can readily generalize the no-cloning obstruction to the case of a general number of copies $k$: $((n,t,k))$-QTSS is impossible whenever $t\leq n/(k+1)$, since otherwise we can share the original $k$ copies and reconstruct $k+1$ copies. More generally, $((n,\Gamma,k))$-QSS is impossible whenever $\Gamma$ contains $k+1$ pairwise disjoint authorized subsets. This is because such a protocol would give a mechanism to convert $k$ copies into $k+1$ by running the dealer on $k$ copies and then independently reconstructing $k+1$ times using the disjoint authorized sets. Thus, we see that increasing $k$ is still subject to a generalized no-cloning obstruction, but the obstruction potentially allows for a richer set of access structures. Given that no-cloning exactly characterizes what is possible in the $k=1$ copy setting, a natural hypothesis is that no-cloning likewise characterizes the general $k$ setting.

Our next two sets of results show that (1) using more copies does allow for going beyond the limitations of a single copy, but (2) that there exist stronger obstructions than no-cloning, showing that the hypothesis above is false. Concrete examples to keep in mind are $((4,2,2))$-QTSS and $((5,2,2))$-QTSS, where using $k=2$ copies, $n=4$ or $n=5$ parties receive shares such that any $t=2$ of them can reconstruct. The no-cloning obstruction shows that $k>1$ copies are necessary, but potentially allows for $k=2$ copies.

\paragraph{Positive Results: Secret Sharing Leveraging Multiple Copies.} Our first main result is to utilize $k>1$ copies to bypass the 1-copy no-cloning obstruction (while of course still being subject to the generalized no-cloning obstruction). In the case of QTSS, the result is:

\begin{theorem}[Informal; see Theorem~\ref{thm:pos-bound}]\label{thm:pos-thresh-inf} There exists a $((n,t,k))$-QTSS scheme whenever $t> \frac{n-k+1}{2}$.
\end{theorem}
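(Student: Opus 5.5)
Since $t>\frac{n-k+1}{2}$ is equivalent to $2t+k-1>n$, the plan is to reduce to the single-copy result of \cite{cleve1999share,gottesman_solo}, which gives $((n',t))$-QTSS whenever $t>n'/2$. We spend $k-1$ of the copies on a small set of distinguished parties and the last copy on a single-copy threshold scheme among the remaining parties.

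\textbf{Construction.} Assume $k\le t$; otherwise we shrink $k$, since the condition is monotone in $k$. Designate parties $1,\dots,k-1$ as ``special'' and give each of them one copy of $\ket{\psi}$ in the clear. Then apply a single-copy threshold scheme to the last copy among the remaining $n'=n-k+1$ parties, with threshold $t$.

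\textbf{Feasibility of the inner scheme.} It exists because $t>n'/2$, which is exactly the hypothesis $2t>n-k+1$.

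\textbf{Reconstruction, first attempt.} A set $S$ of size $t$ that contains a special party holds a copy outright. If $S$ contains no special party, then it consists of $t$ of the $n'$ ordinary parties and reconstructs from the inner scheme.

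\textbf{Where this fails: secrecy.} This attempt breaks the secrecy requirement that sets of fewer than $t$ parties learn nothing, since a single special party already holds the secret. So the construction has to be fixed so that every unauthorized set sees a state independent of $\ket{\psi}$, even though several copies are in play. This is the main obstacle.

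\textbf{Repaired construction.} Each copy $j\in\{1,\dots,k\}$ is encoded with a single-copy scheme over a subset $P_j$ of parties, with threshold $t_j$, chosen so that three conditions hold:
\begin{enumerate}
\item Each scheme is individually feasible: $t_j>|P_j|/2$.
\item Every $t$-set contains at least $t_j$ members of some $P_j$.
\item Every set of size below $t$ contains fewer than $t_j$ members of every $P_j$.
\end{enumerate}

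\textbf{Why secrecy then holds.} The shares of different copies are tensor-product encodings of independent copies. An unauthorized set therefore sees a product of reduced states, each independent of $\ket{\psi}$ by the secrecy of the corresponding single-copy scheme. So product secrecy follows once condition 3 holds.

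\textbf{Choosing the sets.} Take nested or overlapping windows. Let $P_j=\{j,\dots,j+n'-1\}$, cyclically or linearly arranged, each of size $n'=n-k+1$, all with threshold $t_j=t$.
\begin{itemize}
\item Secrecy (condition 3) is then immediate: a set of size below $t$ has fewer than $t$ members in any $P_j$.
\item Reconstruction (condition 2) requires that every $t$-set lie inside some window $P_j$. That is, the $k$ windows must cover all $t$-subsets. This holds for $t$-subsets that miss at least $k-1$ parties in a consistent pattern, but fails in general.
\end{itemize}

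\textbf{Fallback for reconstruction.} Here I would instead let parties hold shares of several copies, with a party in the intersection of windows holding several shares. I would try an inductive construction on $k$: combine an $((n-1,t,k-1))$ scheme with a single-copy scheme in which the extra party participates, choosing thresholds so that any $t$-set either avoids the new party, and so uses the inductive scheme, or contains it. I expect the crux to be finding a share assignment with the covering property while keeping every threshold above half of its group size. I would look for this via a graph-coloring view: color the parties into $k$ classes and use one copy per class structure, matching the paper's hint about $k$-colorability.
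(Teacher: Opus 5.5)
There is a genuine gap: your proposal never produces a scheme that is both correct and secure. The first attempt fails secrecy and the window construction fails reconstruction, both as you note, and the fallback is only a search plan. The deeper problem is the framework of your repaired construction, in which each copy is shared by a \emph{threshold} scheme on a subset $P_j$ of parties. No choice of $P_j,t_j$ can work in general. Take $((4,2,2))$, which the statement covers since $2>(4-2+1)/2$. Condition~3 on singletons forces $t_j\ge 2$. Condition~2 forces $t_j\le 2$ and the pair to lie inside $P_j$, and condition~1 then gives $|P_j|\le 3$. Two $3$-subsets of $[4]$ share a pair, so two such $P_j$ contain at most $5$ of the $6$ pairs. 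The same count kills your inductive step if the new party's single-copy scheme is a threshold scheme. Separately, your reduction to $k\le t$ runs backwards: the hypothesis $2t>n-k+1$ weakens as $k$ grows, so shrinking $k$ can destroy it (e.g.\ $n=10$, $t=2$, $k=9$).

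The missing idea is to use the full single-copy characterization, Theorem~\ref{thm:qss-1}, which realizes \emph{any} monotone access structure without two disjoint authorized sets, and to partition the authorized $t$-sets rather than the parties. For $i\le k-1$ let $\mathcal F_i$ consist of the $t$-sets $T$ with $\min(T\cap[k-1])=i$, and let $\mathcal F_k$ consist of the $t$-subsets of $\{k,\dots,n\}$. Each family is pairwise intersecting. The sets in $\mathcal F_i$ all contain $i$, and $\mathcal F_k$ consists of $t$-subsets of an $(n-k+1)$-set with $2t>n-k+1$. Hence the monotone closure $\Gamma_i\subseteq\Gamma_{t,n}$ of $\mathcal F_i$ has a single-copy scheme, and you share copy $i$ under it. Every $t$-set lies in some $\mathcal F_i$, which gives correctness. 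Every set of size below $t$ is unauthorized in every $\Gamma_i$, so your product-secrecy argument applies unchanged. This is a proper $k$-coloring of the Kneser graph $K(n,t)$, and it is the paper's proof; note that it colors authorized sets, not parties. Equivalently, your induction closes if the new party's copy is shared under the star structure $\{S: n\in S,\ |S|\ge t\}$, which has no two disjoint sets but is not a threshold structure on any subset of parties.
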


In particular, a $((4,2,2))$-QTSS scheme does indeed exist. Theorem~\ref{thm:pos-thresh-inf} is a special case of a more general theorem we prove for general access structures. For a general access structure $\Gamma$, our more general theorem makes use of the \emph{access structure graph} $G_\Gamma$ associated with $\Gamma$. The nodes of $G_\Gamma$ are the minimal sets in $\Gamma$, and two nodes are connected if the corresponding sets are \emph{disjoint}.

\begin{theorem}[Informal; see Theorem~\ref{thm:acc_comp_k_colorable}]\label{thm:pos-general-inf}There exists a $((n,\Gamma,k))$-QSS scheme if the associated access structure graph is $k$-colorable.\end{theorem}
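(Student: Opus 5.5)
The plan is to reduce to the single-copy result of \cite{cleve1999share,gottesman_solo}: for any access structure with no two disjoint authorized sets, there is a $((n,\Gamma'))$-QSS scheme. Fix a proper $k$-coloring $c$ of $G_\Gamma$ with color classes $C_1,\dots,C_k$. For each color $i$, let $\Gamma_i$ be the monotone closure of the minimal sets in $C_i$, that is, all subsets of $[n]$ containing some minimal set of color $i$.

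First I check that each $\Gamma_i$ is single-copy feasible. Since $C_i$ is an independent set in $G_\Gamma$, any two of its minimal sets intersect. Any two authorized sets of $\Gamma_i$ contain such minimal sets, so they also intersect. Hence $\Gamma_i$ has no two disjoint authorized sets, and the one-copy theorem gives a $((n,\Gamma_i))$-QSS scheme $(\Deal_i,\Recon_i)$. One caveat: that theorem is usually stated so that sets outside $\Gamma_i$ learn nothing about the secret. Here I would rather need the scheme for $\Gamma_i$ extended with the complement condition. I would use the standard fact that any access structure contained in a self-dual structure, with no two disjoint authorized sets, admits a scheme in which exactly the sets of $\Gamma_i$ can recover and all others learn nothing.

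Next, the dealer takes copies $\ket{\psi}^{\otimes k}$. It runs $\Deal_i$ on the $i$-th copy, independently for each $i$, and gives party $j$ the tuple of its $k$ shares, one from each scheme.

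Correctness: if $A\in\Gamma$, then $A$ contains some minimal set $M$, which has some color $i$. So $A\in\Gamma_i$, and $A$ runs $\Recon_i$ on its $i$-th components to recover one copy of $\ket{\psi}$.

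Secrecy: if $A\notin\Gamma$, then $A\notin\Gamma_i$ for every $i$, since $\Gamma_i\subseteq\Gamma$. In each independent scheme $i$, the reduced state of $A$'s shares is a fixed state $\sigma_i$ independent of $\ket{\psi}$. The schemes are applied to separate tensor factors of a product input, so $A$'s joint reduced state is $\bigotimes_i\sigma_i$, which is also independent of the secret.

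The main obstacle is precisely this secrecy requirement. It demands that the single-copy schemes for the possibly non-maximal structures $\Gamma_i$ have unauthorized sets equal exactly to the complement of $\Gamma_i$; the secrecy hypothesis itself needs no strengthening, since the adversary is the same set $A$. If the cited one-copy theorem only handles self-dual $\Gamma_i$, I would first enlarge $\Gamma_i$ to a maximal intersecting family $\Gamma_i^*\supseteq\Gamma_i$. This makes correctness easier but may break secrecy for sets in $\Gamma_i^*\setminus\Gamma$.

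To handle that, I would instead use the mixed-state construction of \cite{gottesman_solo}. Its general form allows arbitrary access structures whose authorized sets pairwise intersect, with the unauthorized sets being exactly the complement of the structure, by discarding purifying shares. I would cite it in that form rather than rederive it.
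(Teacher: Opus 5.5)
Your proposal is correct and follows the paper's proof of Theorem~\ref{thm:acc_comp_k_colorable} essentially verbatim: you split $\Gamma$ into the sub-access structures generated by the color classes, share the $i$-th copy with a single-copy scheme for $\Gamma_i$, get correctness from the color of a contained minimal set, and get security from the product structure of the independent sub-sharings. Your worry about secrecy for non-self-dual $\Gamma_i$ is already covered by Theorem~\ref{thm:qss-1} as the paper states it (it gives security for every set outside $\Gamma_i$), which is exactly the mixed-state result you end up citing, so the detour through maximal intersecting families is unnecessary.
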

In the case of thresholds, the graph $G_\Gamma$ is the Kneser graph $K(n,t)$, and Theorem~\ref{thm:pos-thresh-inf} follows from Theorem~\ref{thm:pos-general-inf} by observing that $K(n,t)$ is $k$-colorable whenever $t>\frac{n-k+1}{2}$. Also note that Kneser's Conjecture (formally proved in~\cite{lovasz1978kneser}) shows that $K(n,t)$ is \emph{not} $k$-colorable whenever $t\leq\frac{n-k+1}{2}$, tightly characterizing the colorability of $K(n,t)$.

Observe that Theorems~\ref{thm:pos-thresh-inf} and~\ref{thm:pos-general-inf} are loose with respect to the general no-cloning obstruction whenever $k>1$ (for $k=1$ they are tight by prior work). Indeed, a $((5,2,2))$-QTSS scheme is not excluded by the general no-cloning obstruction, but is also not guaranteed by Theorem~\ref{thm:pos-thresh-inf}. More generally, in terms of the access structure graph $G_\Gamma$, the general no-cloning obstruction shows that a $((n,\Gamma,k))$-QSS scheme cannot exist if $G_\Gamma$ contains a $(k+1)$-clique. While a $(k+1)$-clique implies $G_\Gamma$ is not $k$-colorable, the converse is not true.

\paragraph{Negative Results: Impossibility Beyond No-Cloning.} We also give non-trivial negative results beyond the no-cloning obstruction, which we call the \emph{Clique Path obstruction}. For thresholds, we show:

\begin{theorem}[Informal; see Theorem~\ref{thm:neg-allbig}]\label{thm:neg-thresh-inf} For $k\geq 2$, there does \emph{not} exist a $((n,t,k))$-QTSS scheme whenever $t\leq (n-1)/k$. 
\end{theorem}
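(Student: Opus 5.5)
The plan is to prove the threshold statement as a special case of a general \emph{Clique Path obstruction} for arbitrary access structures, and then check that for $t\le (n-1)/k$ the Kneser graph $K(n,t)$ contains the required configuration. Since $n\ge kt+1$, we can fix $k$ pairwise disjoint $t$-sets $A_1,\dots,A_k$, which form a $k$-clique in $G_\Gamma$, and still have at least one spare party $x$ outside them. We then move from one $k$-clique to the next by swapping a single party of some $A_i$ with a currently unused party. Every set in the sequence has size $t$, so it is authorized. Every $k$-tuple in the sequence consists of pairwise disjoint sets, so it is again a $k$-clique of $G_\Gamma$. The spare party is what makes such moves possible at all; with exactly $kt$ parties the clique is rigid. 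This explains why the threshold bound reads $(n-1)/k$ rather than the no-cloning bound $n/(k+1)$.

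\textbf{Step 1: a rigidity lemma for a single clique.} Given an $((n,\Gamma,k))$-QSS and a $k$-clique $A_1,\dots,A_k$, the $k$ reconstructions act on disjoint registers, so they together form a channel taking $\ket\psi^{\otimes k}$ to $\ket\psi^{\otimes k}$ for every $\ket\psi$. Purifying the dealer, I would invoke the standard characterization of such channels: an exact map from $k$ copies to $k$ copies of an arbitrary pure state cannot leak any information about $\psi$ into the environment. The reason is that the outputs are pure and $\psi$-independent up to the copies themselves, so the environment state must be independent of $\psi$. Hence all registers outside $A_1\cup\dots\cup A_k$, including the spare parties, must be in a state that is independent of $\psi$. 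They are then decoupled from the copies. For each $i$, the union of all other parties together with the environment must contain, in recoverable form, only the $k-1$ copies held by the $A_j$ with $j\ne i$. Because the reconstruction from each $A_i$ must work for every pure $\ket\psi$, it also work for every state of the symmetric subspace by linearity. So the lemma can be phrased as a statement about encodings of the symmetric subspace $\mathrm{Sym}^k$.

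\textbf{Step 2: propagate along the path.} Consider two consecutive cliques $\{A_1,A_2,\dots,A_k\}$ and $\{A_1',A_2,\dots,A_k\}$ with $A_1'=(A_1\setminus\{p\})\cup\{x\}$. Applying Step 1 to the first clique shows that $x$ carries no $\psi$-dependence. Applying it to the second clique shows that $p$ carries no $\psi$-dependence. More precisely, each party is decoupled from the reference once the other $k-1$ sets have been traced out. I would then argue, using decoupling together with recovery from $A_1\setminus\{p\}$ plus a $\psi$-independent ancilla, that the set $A_1\cap A_1'$ of size $t-1$ can already recover a copy once the other cliques are taken into account. Iterating this along a path of swaps, I would shrink a recovering set further. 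The aim is to exhibit $k+1$ pairwise disjoint regions, each able to recover a copy, which contradicts the generalized no-cloning obstruction. An alternative aim is to show that a single party must hold a full copy, which again gives $k+1$ disjoint recoverers once $n\ge kt+1$.

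\textbf{Main obstacle.} The hard step is Step 2: turning "this party is decoupled from $\psi$ given the others" into "the smaller set can still recover." Decoupling statements are relative to which other registers are traced out, and the sets $A_2,\dots,A_k$ overlap with the regions used in the two cliques. I would first try the complementary-recovery formulation. For an isometric encoding, a region $R$ can recover a copy exactly when its complement is decoupled from that copy's reference. I would apply this on the symmetric subspace, treating the $k$ copies as $k$ reference systems permuted symmetrically. If the symmetric structure complicates this, I would fall back to a continuity and averaging argument over the permutation that assigns copies to sets. The goal there is to show that along a clique path the assignment is forced to be consistent, and that the swap step then forces the $(t-1)$-set to recover.
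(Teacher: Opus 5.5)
Your Step 1 is sound and is essentially the paper's constant-residue lemma: after all $k$ reconstructions of a clique, everything left over (spare parties and environment included) is one fixed vector $\ket{\mathrm{Rem}}$. The gap is Step 2, and it is more than a missing computation, because the inference you want cannot be drawn from a single swap. Any argument that uses only correctness on the two cliques $(A_1,A_2,\dots,A_k)$ and $(A_1',A_2,\dots,A_k)$ applies equally to access structures that \emph{do} admit schemes. Take $k=2$, $n=5$, and $\Gamma$ generated by $\{1,2\},\{3,4\},\{3,5\}$. Here $G_\Gamma$ is bipartite, so the union-of-access-structures scheme exists; for instance, user $3$ holds a one-time-padded second copy and users $4,5$ each hold the key. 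The move $(\{3,4\},\{1,2\})\to(\{3,5\},\{1,2\})$ is exactly your swap with $p=4$, $x=5$. Yet $A_1\cap A_1'=\{3\}$ recovers nothing, and even $\{1,2,3\}$ recovers only the copy that $\{1,2\}$ already has. So shrinking recovering sets swap by swap, until $k+1$ disjoint recoverers or a one-party recoverer appear, has no valid first step.

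Relatedly, reconstruction does not ``work on the symmetric subspace by linearity.'' On the encoding of $\sum_j c_j\ket{\psi_j}^{\otimes k}$ it yields $\sum_j c_j\ket{\psi_j}\otimes\ket{\Theta^S_{\psi_j}}$, which is generally entangled and contains no copy. There is also no per-copy reference system, so the complementary-recovery picture you want is not available.

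The missing idea is a way to make ``which copy goes to which set'' well defined off the symmetric subspace, with the contradiction then coming from a path that returns to a \emph{permuted} labeling rather than from local shrinking. The paper uses $\ket{\mathrm{Rem}}$ to define the alternate dealer $\mathcal D'_{\vec A}(\ket{\phi_1}\otimes\cdots\otimes\ket{\phi_k}):=\hat U_{\vec A}^{\dagger}(\ket{\phi_1}\otimes\cdots\otimes\ket{\phi_k}\otimes\ket{\mathrm{Rem}})$. This map agrees with the honest encoding on $\ket\psi^{\otimes k}$, but in it $A_i$ provably reads $\ket{\phi_i}$ even for non-identical inputs.

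The key lemma is that one substitution leaves this linear map unchanged on all of $\mathcal H^{\otimes k}$. The difference of the two maps vanishes on every $\ket v^{\otimes k}$, so each type-sum over words vanishes. After applying the $k-1$ shared reconstructions, distinct words of a fixed type carry linearly independent basis vectors on the shared indices, which forces every term to vanish.

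The threshold structure then enters only combinatorially. For $n=kt+1$ each clique leaves one hole, every swap exchanges the hole with a member of some $A_i$, and such swaps connect $(A_1,A_2,A_3,\dots,A_k)$ to $(A_2,A_1,A_3,\dots,A_k)$. Applying $\hat U_{A_1}$ to the common state $\mathcal D'(\ket{\phi_1}\otimes\ket{\phi_2}\otimes\cdots)$ would then have to yield both $\ket{\phi_1}$ and $\ket{\phi_2}$, which is impossible for linearly independent inputs. The case $n>kt+1$ reduces to this by absorbing the extra shares into $\mathcal H_E$.

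Your fallback remark about forcing a consistent assignment along a path points in this direction. Without the alternate dealers and the substitution lemma, however, there is no argument. The endpoint you aim for, a $(t-1)$-set or a single party recovering, is also not what such a path delivers.
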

In particular, in the case of $k=2$ copies, Theorem~\ref{thm:neg-thresh-inf} exactly matches the positive result in Theorem~\ref{thm:pos-thresh-inf}, showing that both are tight for $k=2$. For example, a $((5,2,2))$-QTSS scheme does \emph{not} exist.

Theorem~\ref{thm:neg-thresh-inf} follows from a more general impossibility for general access structures. Consider a graph $G$ and consider two $k$-cliques $C_0$ and $C_1$ which only differ in a single node. In this case, we say $C_0,C_1$ are \emph{adjacent}. A simple observation is that in any valid $k$-coloring, the colors of $C_0$ immediately give the colors of $C_1$. More generally, by chaining adjacent $k$-cliques $C_0,C_1,\cdots,C_\ell$ into a \emph{Clique Path}, we can conclude the colors of all $k$-cliques in the path from just the colors of $C_0$.

If $C_0$ and $C_\ell$ contain a common node, but the two cliques assign that node a different color, then we can conclude that $G$ is \emph{not} $k$-colorable. In this case, we say $G$ is \emph{non-$k$-colorable by Clique Paths}. Our general negative result shows that non-$k$-colorability by Clique Paths excludes any QSS scheme:

\begin{theorem}[Informal; see Theorem~\ref{thm:neg-criterion}]\label{thm:neg-general-inf} For $k\geq 2$, a $((n,\Gamma,k))$-QSS scheme does \emph{not} exist if $G_\Gamma$ is non-$k$-colorable by Clique Paths.
\end{theorem}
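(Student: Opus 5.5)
The plan is to turn a hypothetical $((n,\Gamma,k))$-QSS scheme into a \emph{genuine} $k$-coloring of the part of $G_\Gamma$ covered by the Clique Path, and then use the Clique Path hypothesis to derive a contradiction. I am unsure how to make the notion of ``color'' rigorous; that step is flagged below as the main obstacle.

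\textbf{Step 1: a $k$-clique acts as the identity on the symmetric subspace.} A $k$-clique $C=\{A_1,\dots,A_k\}$ in $G_\Gamma$ is a family of pairwise disjoint minimal authorized sets. Running $\Deal$ on $\ket{\psi}^{\otimes k}$ and then the $k$ reconstruction maps $\Recon_{A_1},\dots,\Recon_{A_k}$ in parallel gives $k$ output registers. Each output register is exactly $\ket{\psi}$. Since every marginal is pure, the joint output is $\ket{\psi}^{\otimes k}$. The states $\ket{\psi}^{\otimes k}$ span the symmetric subspace. Therefore the composed channel is the identity on the symmetric subspace, and this remains true when the input is entangled with a reference system $R$. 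The standard complementary-channel (decoupling) argument then gives two consequences:
\begin{itemize}
\item the shares outside $\bigcup_i A_i$, together with the discarded parts of each $A_i$, are independent of the input;
\item $\Recon_{A_1}\otimes\cdots\otimes\Recon_{A_k}$ restricted to the image of $\Deal$ is an isometric decoding of a fixed splitting of the $k$ copies into $k$ single-copy subsystems, one carried by each $A_i$.
\end{itemize}

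\textbf{Step 2: define colors and propagate them along the path.} I would define a ``color'' as a label for which copy-subsystem each clique member carries. Because the input is symmetric, I would fix this labeling relative to the reference $R$. Concretely, I would feed in a purification of the maximally mixed state on the symmetric subspace. For a $k$-clique, the $k$ sets then correspond to $k$ distinguishable ``slots'' of this purification.

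Now take adjacent cliques $C_j, C_{j+1}$ that share $k-1$ sets $S$, with $C_j$ containing $B$ and $C_{j+1}$ containing $B'$. By Step 1 applied to $C_j$, the sets in $S$ already recover $k-1$ slots. The remaining slot is then determined: it is whatever is correlated with $R$ conditioned on the $S$-outputs. Both $B$ and $B'$ must decode exactly that slot. So the slot labels of $S$ are unchanged and $B'$ inherits $B$'s label. This is the quantum analogue of ``the colors of $C_j$ determine those of $C_{j+1}$.'' Iterating along $C_0,\dots,C_\ell$ assigns to the common node $A$ of $C_0$ and $C_\ell$ two different slots.

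\textbf{Step 3: the contradiction.} A single set $A$ is forced to carry two different slots, while the other $k-1$ members of $C_\ell$ (disjoint from $A$) carry the rest. Together with $C_0$'s assignment, this yields more than $k$ copies' worth of recoverable quantum information correlated with $R$. That violates no-cloning, or equivalently monogamy of entanglement with $R$.

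\textbf{Main obstacle.} The hard part is Step 2: making ``slot'' meaningful when the dealer only sees symmetric inputs, so that no copy is a priori distinguished. My first attempt would be to show that the decoding isometry of Step 1 is unique up to a permutation acting on the output registers. Then I would show that overlapping cliques must use compatible permutations. If that fails, I would instead work directly with mutual-information/decoupling inequalities along the path, keeping track of which sets are jointly decoupled from $R$.
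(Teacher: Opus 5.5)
Your proposal has a genuine gap. It sits where you flagged it, but the problem is worse than a missing technical step: the notion of ``slot'' in Step~2 cannot be made meaningful using only symmetric inputs. Suppose $R$ purifies the maximally mixed state on the symmetric subspace. After dealing and reconstructing with a clique, the joint state of the $k$ output registers and $R$ is $\frac{1}{\sqrt{N}}\sum_j\ket{s_j}\ket{j}_R\otimes\ket{\mathrm{Rem}}$, with every $\ket{s_j}$ symmetric. This state is therefore invariant under every permutation of the output registers, so $R$ cannot tell which set carries which copy. Your fallback ``unique up to a permutation'' is vacuous for the same reason, and likewise the second bullet of Step~1: on the symmetric subspace there is no ``fixed splitting'' into single-copy subsystems. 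Consequently the propagation claim ``both $B$ and $B'$ must decode exactly that slot'' is the entire content of the theorem, and it is left unproved.

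Step~3 also fails as stated. $C_0$ and $C_\ell$ generally overlap in users beyond the common node; in the $((5,2,2))$ example the sets $\{1,2\}$ and $\{2,5\}$ share user $2$. So their reconstructions cannot be run simultaneously. Moreover, the obstruction is strictly stronger than no-cloning: there $G_\Gamma$ is triangle-free, so no counting of ``copies' worth'' of information correlated with $R$ can yield the contradiction.

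The missing idea is to leave the symmetric subspace by construction rather than trying to read labels off it. Your first bullet of Step~1 is sound and matches the paper's constant-residue lemma: the residue $\ket{\mathrm{Rem}}$ left after applying $\hat U_{A_1}\cdots\hat U_{A_k}$ is independent of $\psi$. This lets you define, for an \emph{ordered} clique $\vec A$, the alternate dealer $\mathcal D'_{\vec A}(\ket{\phi_1}\otimes\cdots\otimes\ket{\phi_k}):=\hat U_{\vec A}^{\dagger}(\ket{\phi_1}\otimes\cdots\otimes\ket{\phi_k}\otimes\ket{\mathrm{Rem}})$ on all of $\mathcal H^{\otimes k}$. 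The labeling is imposed by the ordering, the map agrees with the honest dealer on every $\ket{\psi}^{\otimes k}$, and $A_i$ provably decodes $\ket{\phi_i}$ even for non-identical inputs.

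The key lemma you then need is that a single substitution does not change this linear map. Its difference $\Delta=\mathcal D'_{\vec A}-\mathcal D'_{\vec B}$ kills every $\ket v^{\otimes k}$. Expanding in a basis, for each type $\vec m$ the sum of $\Delta(\ket{e_{w_1}}\otimes\cdots\otimes\ket{e_{w_k}})$ over words $w$ of type $\vec m$ vanishes. Applying the $k-1$ shared reconstruction unitaries turns each term into $\bigotimes_{j\neq i_0}\ket{e_{w_j}}\otimes\ket{\delta_w}$. Within a type, the $k-1$ shared letters determine $w$, so linear independence forces every $\ket{\delta_w}=0$, and hence $\Delta=0$.

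Thus all alternate dealers along a Clique Path coincide, and the contradiction is pure linear algebra rather than no-cloning. Suppose $S$ sits at index $i$ in one clique and at index $i'\neq i$ in another. Then $\hat U_S$ applied to the single unit vector $\mathcal D'(\ket{\phi_1}\otimes\cdots\otimes\ket{\phi_k})$ is a product vector whose $\mathcal K_S$ factor is proportional to both $\ket{\phi_i}$ and $\ket{\phi_{i'}}$. Choosing these linearly independent forces the vector to be zero, contradicting unitarity.
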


Very roughly, our proof is a proof by contradiction. We imagine artificial dealers associated to a coloring of a particular $k$-clique which, instead of using $k$ copies of a single state, use $k$ independent states $\ket{\psi_1}\cdots\ket{\psi_k}$, and deal shares in such a way that each authorized set in the $k$-clique recovers the state according to its color. We then show that dealers associated to adjacent $k$-cliques actually produce states that are identical, which extends to general Clique Paths by transitivity. If the access structure graph is non-$k$-colorable by Clique Paths, then the inconsistently-colored user recovers different states depending on the dealer, which contradicts the fact that the dealers all produce the same state.

Note that in the case $k=2$, a $2$-Clique Path assigning inconsistent colors is simply an odd-length cycle; the existence of such a cycle being equivalent to being non-2-colorable. Hence, Theorem~\ref{thm:neg-general-inf} and Theorem~\ref{thm:pos-general-inf} exactly match and are therefore tight in the case $k=2$.

Note that if $G$ has a $(k+1)$-clique, it is non-$k$-colorable by Clique Paths, but the converse is false. Hence, Theorem~\ref{thm:neg-general-inf} strictly generalizes the general no-cloning obstruction. On the other hand, there are graphs which are non-$k$-colorable, but are not non-$k$-colorable by Clique Paths; for example a graph may be triangle-free but still require more than 3 colors.

\paragraph{Application to Quantum Erasure Codes.} A $((d,n))$-Quantum Erasure
Code (QErC) encodes a quantum state $\ket{\psi}$ into $n$ registers such that
erasing any $d$ registers still allows for recovering $\ket{\psi}$. We can
naturally generalize to $((d,n,k))$-QErC where $k$ copies are given as input
but only a single copy needs to be recovered. No-cloning implies that
$d\geq n\frac{k}{k+1}$ is impossible. A $((d,n,k))$-QErC scheme is exactly a
$((n,n-d,k))$-QTSS scheme satisfying correctness but not necessarily security. Our Clique Path obstruction never invokes security (the alternate dealers are built from the reconstruction procedures alone), so setting $t=n-d$ in Theorem~\ref{thm:neg-thresh-inf} gives a stronger impossibility:

\begin{theorem}\label{thm:qerc}Regardless of rate, for $k\geq 2$, a $((d,n,k))$-QErC scheme does not exist whenever $d\geq n\frac{k-1}{k}+\frac{1}{k}$.
\end{theorem}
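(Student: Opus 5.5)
The plan is to reduce Theorem~\ref{thm:qerc} to the Clique Path obstruction for thresholds (Theorem~\ref{thm:neg-thresh-inf}). The crucial point is that the proof of that obstruction uses only correctness, never security.

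\textbf{Step 1: a code is a correctness-only sharing scheme.} Given a $((d,n,k))$-QErC, view its encoder as a dealer that takes $k$ copies of $\ket{\psi}$ and outputs $n$ shares. Erasing any $d$ registers leaves an arbitrary set of $t=n-d$ registers. The erasure decoder for that erasure pattern is therefore a reconstruction procedure for that set of $t$ users. So the code is exactly a $((n,n-d,k))$-QTSS scheme in the weak sense: every set of at least $t$ users (hence every authorized set of the threshold access structure) recovers one copy of $\ket{\psi}$, with no secrecy guarantee. Monotonicity is automatic, since a larger set may simply discard extra registers.

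\textbf{Step 2: the obstruction never uses security.} I would re-examine the argument behind Theorem~\ref{thm:neg-general-inf}. Fix a $k$-clique $C$ of pairwise disjoint minimal authorized sets with a coloring. The artificial dealer for $(C,\text{coloring})$ is built by running the real dealer and then applying the reconstruction maps of the disjoint sets in $C$. Because the sets are disjoint, these maps act on separate registers. The resulting description and the comparison between adjacent cliques then rely only on correctness: each authorized set outputs $\ket{\psi}$, and operations on disjoint registers commute. Where secrecy might seem to enter, for instance in arguing that the output of one set is unentangled with the rest, I would use correctness instead. If a set always outputs the pure state $\ket{\psi}$, that output is in a product state with everything else, for every input. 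This is what makes the transitivity along Clique Paths work.

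I expect this verification to be the main obstacle. One must check that every step in the proof of Theorem~\ref{thm:neg-general-inf} (equality of the states produced by adjacent-clique dealers, and the final contradiction at the inconsistently colored node) is derived purely from the reconstruction maps. If any step does invoke secrecy, I would replace it with the purity argument above.

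\textbf{Step 3: arithmetic.} With Step 2 in hand, Theorem~\ref{thm:neg-thresh-inf} shows that no correctness-only $((n,t,k))$-QTSS exists for $k\ge2$ whenever $t\le (n-1)/k$. Substituting $t=n-d$, the condition $n-d\le (n-1)/k$ is equivalent to
\[
d\ \ge\ n-\frac{n-1}{k}\ =\ n\frac{k-1}{k}+\frac1k .
\]
Since the rate and the local dimensions never entered the argument, the impossibility holds regardless of rate.
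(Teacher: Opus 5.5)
Your proposal is correct and matches the paper's own argument. The paper likewise treats a $((d,n,k))$-QErC as an $((n,n-d,k))$-QTSS scheme that satisfies correctness only, notes that the Clique Path obstruction never invokes security, and substitutes $t=n-d$ into Theorem~\ref{thm:neg-thresh-inf}. Your Step 2 check is the verification the paper asserts in one line: purity-from-correctness is exactly how Lemma~\ref{lem:neg-unitary} gets the product form, and the restriction from $n>kt+1$ to $n=kt+1$ in Theorem~\ref{thm:neg-allbig} also needs only correctness to carry over.
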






Our positive and negative results leave a gap for $k\geq3$. Given the central
role of $k$-colorings on both sides, it is natural to conjecture that our
positive results are exactly optimal; we state this as
Conjecture~\ref{conj} and discuss it in Section~\ref{sec:discuss}.

\subsection{Technical Overview}

\paragraph{Positive Results: The Union-of-Access-Structures Approach.} We prove Theorems~\ref{thm:pos-thresh-inf} and~\ref{thm:pos-general-inf} by making black-box use of $k$ separate single-copy solutions, which in turn are provided by~\cite{cleve1999share,gottesman_solo}. Consider the case $k=2$. Consider the assumed 2-coloring of the access structure graph $G_\Gamma$. Then let $\Gamma_0,\Gamma_1$ be the partition of $\Gamma$ according to the coloring. Recall that $G_\Gamma$ contains an edge when the sets are disjoint, in which case they are given different colors. Therefore, $\Gamma_0$ and separately $\Gamma_1$ have no pairs of disjoint sets, meaning a $((n,\Gamma_0,1))$-QSS scheme and a $((n,\Gamma_1,1))$-QSS scheme exist.

Therefore, to obtain a $((n,\Gamma,2))$-QSS scheme, we simply take the two copies of $\ket{\psi}$, and share one according to the $((n,\Gamma_0,1))$-QSS scheme, and the other copy according to the $((n,\Gamma_1,1))$-QSS scheme, giving each user their shares from both schemes. An authorized set of users $S$ checks their color -- that is, the $\Gamma_b$ that $S$ belongs to -- and reconstructs (one copy of) $\ket{\psi}$ from the shares for that color. Meanwhile, an unauthorized set is not in either $\Gamma_0$ nor $\Gamma_1$, so by the security of the underlying 1-copy QSS, the shares for both schemes reveal no information about $\ket{\psi}$. A hybrid argument then shows that the combined shares for an unauthorized set reveal no information about $\ket{\psi}$.

Applying to the case of thresholds, we see that adding additional copies gives us a small improvement over the single copy case. In particular, given $k$ colors/copies, we can give all sets containing user 1 the color 1. Then we give any set containing user 2 (but not 1) the color 2, and so on up until we've used the first $k-1$ colors. Then all sets which contain none of the first $k-1$ users are given the final color $k$. See Figure~\ref{fig:422} for a picture of the $((4,2,2))$-QTSS case.

\begin{figure}[htb]
    \centering
    \begin{minipage}[c]{0.8\textwidth}\centering
        \begin{minipage}[c]{0.25\textwidth}
          \centering
          \includegraphics[width=\textwidth]{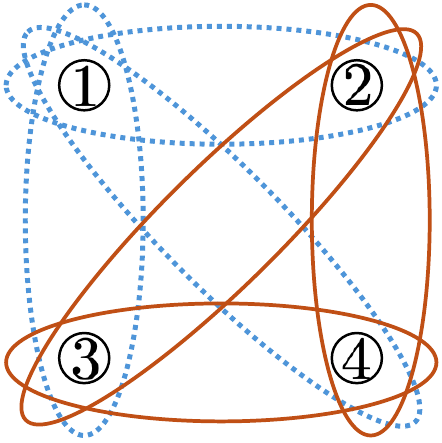}
        \end{minipage}%
        \hspace*{0.25\textwidth}
        \begin{minipage}[c]{0.25\textwidth}
           \centering
           \includegraphics[width=\textwidth]{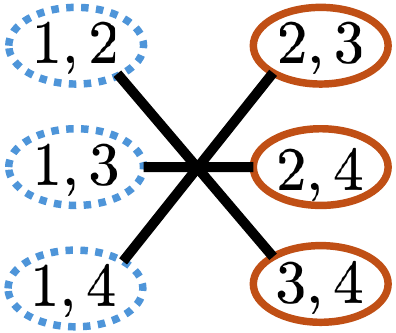}
        \end{minipage}
        \caption{\label{fig:422} Left: the $n=4$ users, and all subsets of $t=2$ users. To share $k=2$ copies, share the first copy to the dotted sets and the second copy to the solid sets. Right: the access structure graph for $t=2$ and $n=4$. Observe how the graph is bipartite.}
    \end{minipage}
\end{figure}

The first $k-1$ colors automatically have no edges between them since any two sets given the same color contain the same user and therefore have no edge. For the last color, this is just a $((n-k+1,t,1))$-QTSS scheme, which is valid as long as $t> (n-k+1)/2$.  

The coloring above is rather simple, and gives a perhaps underwhelming improvement over the 1-copy case. In fact, if we do not care about security, a trivial $((n,t,k))$-QTSS scheme would just give $k-1$ copies of $\ket{\psi}$ to the first $k-1$ users respectively, and then use a $((n-k+1,t,1))$-QTSS scheme for the remaining users.\footnote{The no-cloning obstruction actually applies equally well to QSS schemes with no security. Looking ahead, our Clique Path obstruction also applies equally well to QSS with no security. Also note that \emph{single-copy} QSS for an access structure $\Gamma$ actually implies security for sets in the complement of $\Gamma$. This is because, if an authorized set learns $\ket{\psi}$, by no-cloning, the complement cannot learn any information about $\ket{\psi}$. } A natural question is whether a more sophisticated coloring can give lower thresholds. Unfortunately, Kneser's Conjecture, which was proved by~\cite{lovasz1978kneser}, shows that the above coloring is exactly optimal. Thus, this union-of-access-structures approach cannot achieve thresholds below $(n-k+1)/2$.

\paragraph{Negative Results: The Clique Path Obstruction.} A natural question is whether a more sophisticated scheme can improve on the union-of-access-structures approach to give schemes for graphs that are not $k$-colorable. In particular, the approach above was just using 1-copy schemes as a black box. Perhaps a natively many-copy scheme could use the copies in a more sophisticated way to obtain a richer set of access structures. We were not able to find any such schemes, so the next natural question is whether we can give improved lower-bounds, which lead to Theorems~\ref{thm:neg-thresh-inf} and~\ref{thm:neg-general-inf}.

To illustrate our lower-bound technique, we consider a hypothetical $((5,2,2))$-QTSS scheme for $n=5$ users, threshold $t=2$, using $k=2$ copies. Since user sets $\{1,2\}$ and $\{3,4\}$ are disjoint authorized user sets, we know that $k\geq 2$ copies is necessary. However, the access structure graph requires 3 colors, preventing us from using Theorem~\ref{thm:pos-general-inf} to get a positive result. But since there are not 3 disjoint authorized user sets, the generalized no-cloning obstruction does not rule out $k=2$. 

We now explain how to rule out a $((5,2,2))$-QTSS scheme; the approach readily generalizes to give Theorems~\ref{thm:neg-thresh-inf} and~\ref{thm:neg-general-inf}. We start from a hypothetical $((5,2,2))$-QTSS scheme. Our idea is to pick a pair of disjoint authorized sets, say $\{1,2\}$ and $\{3,4\}$, and try to unambiguously associate each of the two input copies to one of the two sets, one set per input copy. Let's say the first copy goes to $\{1,2\}$ and the second copy goes to $\{3,4\}$. This association would correspond to a coloring for this pair of authorized sets.

Unfortunately, right away this idea runs into problems: for general dealers, it isn't well defined to say that the first copy ``goes to'' $\{1,2\}$. To see why, consider the following two unitaries acting on two identical registers: $U_0$ is the identity, and $U_1$ swaps the registers. We might be tempted to say that $U_0$ associates the first register with the first copy and the second register with the second copy, while $U_1$ does the opposite. However, in our case, the inputs are always in the symmetric subspace since the input is always two copies of the same state. Once we restrict to the symmetric subspace, $U_0$ and $U_1$ are actually both just the identity, and are hence identical.

Hence, any attempt to unambiguously assign copies to authorized sets requires understanding how the dealer behaves outside of the symmetric subspace. But the dealer may just reject anything orthogonal to the symmetric subspace, which would not change the correctness or security of the protocol. For such dealers, just looking at the dealer's isometry would therefore not allow for an unambiguous assignment.

Along similar lines, one may try to determine the assignment by feeding the dealer independent states $\ket{\psi_0}\ket{\psi_1}$ as input, and seeing which state is recovered by $\{1,2\}$ and $\{3,4\}$. But the dealer's first step may be to project onto the symmetric subspace. Conditioned on the projection accepting, the final states recovered by $\{1,2\}$ and $\{3,4\}$ are actually the uniform superposition over $\ket{\psi_0}\ket{\psi_1}$ and $\ket{\psi_1}\ket{\psi_0}$, giving no well-defined way to assign states to each set.

\medskip

Our remedy is the following: we will take the original dealer $D$, and construct an alternate dealer $D'_{\{1,2\},\{3,4\}}$ which does unambiguously define $\{1,2\}$ as getting the first copy, and $\{3,4\}$ as getting the second copy. To construct the new dealer, do the following: run $D$ on an arbitrary input $\ket{\psi}$, and then reconstruct with $\{1,2\},\{3,4\}$. The resulting joint state looks like:

\[\ket{\psi}, \tau_{1,2}\;\;\;,\;\;\;\ket{\psi},\tau_{3,4}\;\;\;,\;\;\;\rho_5\]
Here, $\tau_{1,2}$ (respectively $\tau_{3,4}$) is the ``leftover'' residual state for users $\{1,2\}$ (resp. $\{3,4\}$) after reconstructing $\ket{\psi}$, and $\rho_5$ is user 5's share, which is still around since user 5 did not participate in any reconstruction. Note that $\tau_{1,2},\tau_{3,4},\rho_5$ will typically be entangled. However, they cannot be entangled with the two copies of the pure state $\ket{\psi}$.

Now we observe that the joint system $\tau_{1,2},\tau_{3,4},\rho_5$ must actually be independent of $\ket{\psi}$, in that the dealer always produces the same joint state regardless of the input state. This can be seen by supposing $\ket{\psi_0}$ and $\ket{\psi_1}$ yield different systems, and then computing the final state for $\ket{\psi_0}+\ket{\psi_1}$. If the systems for $\ket{\psi_0}$ and $\ket{\psi_1}$ were different, then the system for $\ket{\psi_0}+\ket{\psi_1}$ will actually be entangled with the reconstructed states, violating the correctness of the secret sharing scheme.

Our new dealer $D'_{\{1,2\},\{3,4\}}$ is defined as follows. On input state $\ket{\psi_0}\ket{\psi_1}$, it constructs the state
\[\ket{\psi_0}, \tau_{1,2}\;\;\;,\;\;\;\ket{\psi_1},\tau_{3,4}\;\;\;,\;\;\;\rho_5\]
Then it reverses the reconstruction procedure for users $\{1,2\}$ and $\{3,4\}$ to get the shares for the users. If $\ket{\psi_0}=\ket{\psi_1}$, $D'_{\{1,2\},\{3,4\}}$ outputs the same state as $D$. However now it is clear that $\{1,2\}$ will always reconstruct $\ket{\psi_0}$ and $\{3,4\}$ will always reconstruct $\ket{\psi_1}$.

\medskip

We can analogously define alternate dealers $D'_{S_0,S_1}$ for any pair of disjoint authorized user sets $S_0,S_1$. We can moreover relate the different dealers. We show that for any $\ket{\psi_0}\ket{\psi_1}$, $D'_{\{1,2\},\{3,4\}}$ and $D'_{\{1,2\},\{3,5\}}$ output exactly the same state. This makes intuitive sense: shares for users 1 and 2 are constructed in the same way in both cases, and shares for users 3,4,5 are equivalent to shares of $\ket{\psi_1}^{\otimes 2}$ according to the original dealer $D$. Actually proving equality, however, requires care, as there could in principle be different entanglement across the $\{1,2\}$ sub-system and the $\{3,4,5\}$ sub-system. Our solution is to consider the alternate dealers applied to \emph{all} symmetric product states, for example $(\alpha\ket{\psi_0}+\beta\ket{\psi_1})^{\otimes 2}$. On such product states, our alternate dealers behave identically to the original dealer, and are therefore all equivalent. Meanwhile, the symmetric product states contain cross terms that correspond exactly to the asymmetric states we are interested in, but also cross terms corresponding to different asymmetric states. By leveraging the unitarity of the alternate dealers, we actually show that the different dealers behave identically for \emph{every} cross term, giving the desired result.

More generally, $D'_{S_0,S_1}$ and $D'_{S_0,S_1'}$ output the same state whenever $S_0$ is disjoint from both $S_1$ and $S_1'$, and symmetrically for substitutions in the first coordinate. Chaining such substitutions, we consider the following sequence of dealers:
\[D'_{\{1,2\},\{3,4\}}\rightarrow D'_{\{1,2\},\{3,5\}}\rightarrow D'_{\{1,4\},\{3,5\}}\rightarrow D'_{\{1,4\},\{2,5\}}\rightarrow D'_{\{3,4\},\{2,5\}}\]

Each of these dealers, on input $\ket{\psi_0}\ket{\psi_1}$, produces the same state. However, observe that from $D'_{\{1,2\},\{3,4\}}$, we have set $\{3,4\}$ recovers $\ket{\psi_1}$, while in $D'_{\{3,4\},\{2,5\}}$, set $\{3,4\}$ recovers $\ket{\psi_0}$. This is impossible if the dealers construct the same state. This gives a contradiction, showing that the original $((5,2,2))$-QTSS scheme is impossible. See Figure~\ref{fig:522}.

\begin{figure}[htb]
    \centering
    \begin{minipage}[c]{0.8\textwidth}
        \centering
        \includegraphics[width=0.3\textwidth]{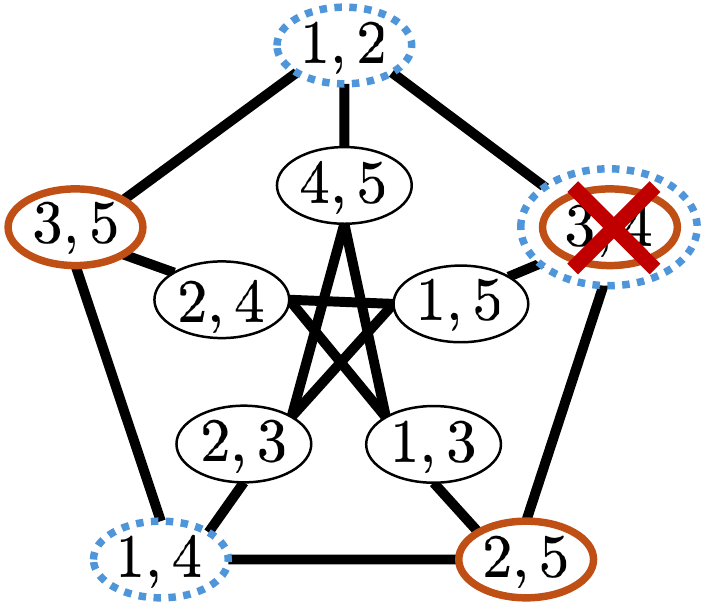}
        \caption{\label{fig:522} The $((5,2,2))$-QTSS case. Start with an arbitrary coloring of the nodes $\{1,2\}$ and $\{3,4\}$, where the coloring corresponds to which position those sets appear in the alternate dealer. Then following the edges around the perimeter, we can color in the nodes $\{3,5\}$, $\{1,4\}$, and $\{2,5\}$, and finally $\{3,4\}$ again. Each edge in this cycle corresponds to an alternate dealer associated to the (ordered) pair of endpoints, and all the dealers in the cycle must produce the same state. But the final coloring for $\{3,4\}$ is different from the first coloring, meaning that $\{3,4\}$ recovers different states under each dealer. This is impossible since all the dealers produce the same state.}
    \end{minipage}
\end{figure}

\medskip

The above approach readily generalizes to give meaningful obstructions for more general access structures. Notice that the sequence of dealers above corresponds to a cycle in the access structure graph:
\[\{3,4\}\rightarrow\{1,2\}\rightarrow\{3,5\}\rightarrow\{1,4\}\rightarrow\{2,5\}\rightarrow\{3,4\}\]
 Each edge corresponds to a dealer. The reason that $\{3,4\}$ starts on the right and ends on the left is exactly because this cycle has odd length. 

 Our more general result is therefore that if the access structure graph $G_\Gamma$ contains an odd-length cycle, then there is no $((n,\Gamma,2))$-QSS scheme. Notice that this exactly characterizes the 2-copy case, as a graph is 2-colorable if and only if it has no odd-length cycle. This proves Theorem~\ref{thm:neg-general-inf} for the case $k=2$, and by specializing to thresholds, we obtain Theorem~\ref{thm:neg-thresh-inf} for $k=2$.

 \medskip

Moving to $k=3$ or more copies, our approach also generalizes, but is not tight. We can readily define alternate dealers $D'_{S_0,S_1,\cdots,S_{k-1}}$ for any $k$ pairwise disjoint authorized user sets; equivalently for any $k$-clique in $G_\Gamma$. We can give $\ket{\psi_0}\ket{\psi_1}\cdots\ket{\psi_{k-1}}$ to this dealer, and then set $S_i$ will recover state $\ket{\psi_i}$. Moreover, we can replace any $S_i$ with $S_i'$, as long as $S_i'$ remains disjoint from the other $S_j$. Generalizing our result above, the resulting dealers produce identical states. In particular, $S_i$ and $S_i'$ must both output $\ket{\psi_i}$.

We can then change multiple sets, one at a time, to obtain paths of $k$-cliques. If we ever obtain a Clique Path that results in one of the $S_i$ appearing but at a different index in the dealer, this corresponds to an inconsistent coloring of $S_i$, and means that $S_i$ recovers a different state $\ket{\psi_j}$ for $j\neq i$ under this dealer. On the other hand, since all the dealers in the Clique Path must produce the same state, we know that $S_i$ must always produce $\ket{\psi_j}$. This is a contradiction, showing that a $((n,\Gamma,k))$-QSS scheme actually cannot exist. See Figure~\ref{fig:723} for an example. This generally proves Theorem~\ref{thm:neg-general-inf}. We then specialize to the threshold case, in which case we observe there is a contradictory Clique Path as long as there is a $k$-clique plus 1 more user; that is as long as $n\geq kt+1$. This proves Theorem~\ref{thm:neg-thresh-inf}.

\begin{figure}[htb]
    \centering
    \begin{minipage}[c]{0.8\textwidth}
        \centering
        \includegraphics[width=0.4\textwidth]{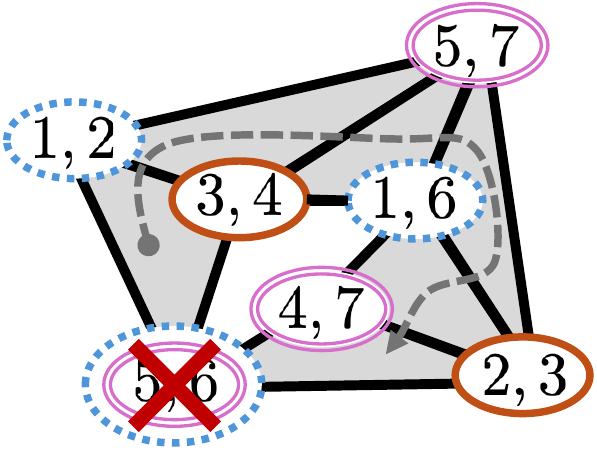}
        \caption{\label{fig:723} The $((7,2,3))$-QTSS case. The triangles are the $3$-cliques, and the dashed arrow is the Clique Path. Observe how the node $\{5,6\}$ is assigned different colors by the first and last Clique. Note that for $((7,2,3))$-QTSS, there are actually shorter Clique Paths (namely, of length 5) which can be used to give a contradiction. However, the above Clique Path more generally rules out $((7,\Gamma,3))$-QSS for \emph{any} access structure where $\{1,2\},\{1,6\},\{2,3\},\{3,4\},\{4,7\},\{5,6\},\{5,7\}$ are authorized sets.}
    \end{minipage}
\end{figure}

\subsection{Further Discussion}\label{sec:discuss}

\paragraph{Entangled inputs.} In this work, we exclusively consider the case of pure-state inputs. In contrast, existing work on (single-copy) quantum secret sharing and error correction often considers the input being a part of a larger system, and the authorized sets recover the sub-system together with any entanglement with the larger system. However, multiple input copies seems incompatible with entanglement-preserving QSS/error correction. Indeed, due to the monogamy of entanglement, given a state $\rho$ entangled with a larger system, it is infeasible to produce two identical copies of $\rho$ that each possess the same entanglement.

Nevertheless, we think the restriction of QSS/error correction to pure states remains interesting and useful, since in many cases we only care about pure states.

\paragraph{Directions for future work.} The $k=1$ case for QSS was resolved by~\cite{cleve1999share,gottesman_solo}, and our work resolves the $k=2$ case, while giving non-tight positive and negative results for $k=3$ and above. A natural conjecture given the close connection between our colorings and our results (both positive and negative) is that coloring captures feasibility for general $k$:

\begin{conjecture}\label{conj}There exists a $((n,\Gamma,k))$-QSS scheme if \emph{and only if} the associated access structure graph is $k$-colorable.
\end{conjecture}

Our work proves Conjecture~\ref{conj} for the case $k\leq 2$. We leave proving or disproving Conjecture~\ref{conj} for $k\geq 3$ as an interesting open question.

Even for $k=3$, we believe that significantly new ideas are required to prove the conjecture. In both~\cite{cleve1999share,gottesman_solo} and our work, the obstruction is proved by constructing an explicit counterexample. In~\cite{cleve1999share,gottesman_solo}, they construct an impossible cloning algorithm from any disallowed access structure. In our work, we produce a sequence of identical states that behave differently. In both cases the counterexample is built from a combinatorial witness that the
access structure graph is not $k$-colorable, and the proof converts that witness
into an information-theoretic contradiction. For $k\leq 2$, such witnesses exist for \emph{any} non-$k$-colorable graph, namely an edge for $k=1$ and an odd-length cycle for $k=2$. However for $k\geq 3$, $k$-coloring is ${\sf NP\text{-}Complete}$, and so unless ${\sf NP}={\sf coNP}$, there is no polynomial-time verifiable witness for being non-$k$-colorable. Even if allowing for quantum witnesses, there will be no polynomial-time-verifiable witness unless ${\sf NP}\subseteq{\sf coQMA}$, which seems unlikely.

Therefore, any proof of Conjecture~\ref{conj}, even for $k=3$, will require either constructing counterexamples requiring super-polynomial time to verify, or else must leverage a different proof that does not work by constructing explicit counterexamples. Perhaps the most natural approach to prove Conjecture~\ref{conj} is to derive from any potential dealer a coloring of the authorized sets. Indeed, this was our initial direction. However, as mentioned above, extracting an unambiguous coloring from the dealer seems difficult, if not impossible. 

\medskip

Another direction for future work is to consider quantum error correction for Pauli errors. The ability to correct $d$ arbitrary Pauli errors allows for correcting $2d$ erasures. As such, the no-cloning theorem imposes a limit of correcting no more than a Pauli error rate of $1/4$. This is in contrast to the classical setting where codes are capable of correcting arbitrarily close to a bit-flip error rate of $1/2$. Can using multiple copies allow for higher error rates?

\medskip

Finally, consider the $((4,2,2))$-QTSS scheme guaranteed by Theorem~\ref{thm:pos-thresh-inf} and depicted in Figure~\ref{fig:422}. Any two users can recover one of the copies of $\ket{\psi}$. But how many users are required to recover both copies? It is not difficult to see that any set of three users which contains user 1 has an authorized pair for both states and can therefore recover $\ket{\psi}^{\otimes 2}$. However, the set $\{2,3,4\}$ is only authorized to recover one copy.

We may instead hope that a $((4,2,2))$-QTSS scheme allows \emph{any} set of three users to recover both copies. Note that a threshold of $3$ is required, by the usual no-cloning obstruction.

More generally, we may wish to construct a $((n,t,k))$-QTSS scheme where higher thresholds allow for recovering more copies, up to the limit of what is not impossible by no-cloning or other obstructions. As illustrated by the $((4,2,2))$-QTSS case, it seems unlikely that approaches like the union-of-access-structures construction would accomplish this, and that the QTSS scheme must more fundamentally operate on the joint input system. Alternatively, perhaps such a QTSS scheme can be shown to be impossible.

\subsection{Publication History}

A very early version of this work appeared as the first author's undergraduate thesis~\cite{chien_augmented}, which was supervised by the third author. The thesis contained the positive results using multiple copies, but did not include the negative results. The negative results were developed for the second author's undergraduate thesis~\cite{Mukherjee}, also advised by the third author. This paper subsumes these theses, and strengthens the negative results.

Subsequent to the first author's thesis,~\cite{TQC:CGLR23} consider QSS for general access structures, but with a focus on efficiency. Their main result is that, assuming post-quantum secure one-way functions, any access structure $\Gamma$ represented by efficient monotone circuits has a $((n,\Gamma,k))$-QSS scheme with $k\leq n$ that is efficient, and in particular has short shares. One limitation of their scheme is that it is only secure against computationally-bounded adversaries.  Their work does not explore the minimal number of copies required to perform QSS.
\section{Preliminaries}

\subsection{Quantum Secret Sharing}

Very roughly, a (quantum) secret sharing scheme enables a dealer to share (quantum) information across a number of users, such that authorized subsets of users can reconstruct the information, but unauthorized subsets of users cannot. We now describe quantum secret sharing more formally.

One distinction we make from the existing literature on quantum secret sharing is that we will assume that the secret information being shared is a pure state, whereas existing definitions sometimes also allow states that are entangled with an outside register. Our use of pure-state secret sharing is necessary for our definition to make sense in the multi-copy setting.

A quantum secret sharing (QSS) scheme is associated with a number $n$ of users and an access structure $\Gamma\subseteq 2^{[n]}$, which is a collection of subsets of $[n]$. We will always assume that $\Gamma$ is \emph{monotone}, meaning that for any $S\in \Gamma$ and any $T\supseteq S$, $T\in\Gamma$ as well. In this work we consider a multi-copy variant, in which case a QSS scheme is also associated with a number $k$ of copies. Such a QSS scheme consists of two algorithms:
\begin{itemize}
    \item $\Deal(\ket{\psi}^{\otimes k})$: takes as input $k$ identical copies of a quantum state $\ket{\psi}$. It outputs a joint state $\rho_{R_1,R_2,\cdots,R_n}$ over $n$ systems $R_1,\cdots,R_n$, with $R_i$ having Hilbert space $\mathcal H_i$.
    \item $\Recon(S,\rho'_{\{R_i\}_{i\in S}})$ takes as input a subset $S\subseteq[n]$ and a joint state $\rho'_{\{R_i\}_{i\in S}}$ over $|S|$ systems $R_i$ for $i\in S$. It outputs a state $\ket{\psi'}$.
\end{itemize}

We say a set $S\subseteq[n]$ is \emph{authorized} if $S\in\Gamma$. A set is \emph{unauthorized} if it is not authorized ($S\notin\Gamma$).

In typical usage, the entity running $\Deal$ will be called the ``dealer.'' The dealer will take as input ($k$ copies of) the secret state $\ket{\psi}$, and will output $\rho_{R_1,R_2,\cdots,R_n}\gets\Deal(\ket{\psi}^{\otimes k})$. It will then give the partial state $\rho_{R_i}$ on system $R_i$ to user $i$, for $i=1,\cdots,n$. $\rho_{R_i}$ will be referred to as user $i$'s ``share.'' Later, a set of authorized users $S$ will pool their shares $\rho_{\{R_i\}_{i\in S}}$, and run $\Recon(S,\rho_{\{R_i\}_{i\in S}})$, which is supposed to recover $\ket{\psi}$. Meanwhile, if an unauthorized set tries to pool their shares, they should learn nothing about $\ket{\psi}$. We now formalize this expected behavior:

\begin{definition}[QSS Correctness] A QSS scheme is \emph{correct} if, for any pure state $\ket{\psi}$ and any authorized set $S\in\Gamma$, if we let $\rho_{R_1,R_2,\cdots,R_n}\gets\Deal(\ket{\psi}^{\otimes k})$ and $\ket{\psi'}\gets\Recon(S,\rho_{\{R_i\}_{i\in S}})$, then $\ket{\psi'}=\ket{\psi}$.\end{definition}
That is, any authorized set of users actually correctly recovers $\ket{\psi}$.

\begin{definition}[QSS Security] A QSS scheme is \emph{secure} if, for any pair of pure states $\ket{\psi^0},\ket{\psi^1}$ and any unauthorized set $S\notin\Gamma$, if we let $\rho^b_{R_1,R_2,\cdots,R_n}\gets\Deal(\ket{\psi^b}^{\otimes k})$, then $\rho^0_{\{R_i\}_{i\in S}}=\rho^1_{\{R_i\}_{i\in S}}$.\end{definition}
That is, any unauthorized set of users cannot learn anything about the secret.

\begin{definition} A $((n,\Gamma,k))$-QSS scheme is a QSS scheme for $n$ users, the access structure $\Gamma$, and $k$ copies of the input, that is both correct and secure.\end{definition}

\paragraph{QSS for thresholds.} We now study the special case of threshold QSS (QTSS). The threshold access structure $\Gamma_{t,n}$ is the collection of all subsets of $[n]$ of size at least $t$.

\begin{definition} A $((n,t,k))$-QTSS scheme is a QSS scheme for $n$ users, the threshold access structure $\Gamma_{t,n}$, and $k$ copies of the input, that is both correct and secure.\end{definition}

\paragraph{A complete characterization for $k=1$.} We now briefly recall known results for $k=1$, which give a complete picture of what is possible and impossible.

\begin{theorem} \label{thm:qss-1} \cite{cleve1999share} 
A $((n,\Gamma,1))$-QSS scheme exists if and only if there are no two disjoint sets in $\Gamma$.
\end{theorem}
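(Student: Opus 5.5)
The plan proves the two directions separately: necessity from the no-cloning theorem, sufficiency by building a scheme out of $((3,2,1))$ threshold schemes.

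\emph{Necessity.} Suppose $S_0,S_1\in\Gamma$ are disjoint and a $((n,\Gamma,1))$-QSS scheme exists. On input $\ket{\psi}$, run $\Deal$. Then run $\Recon(S_0,\cdot)$ on the shares in $S_0$ and, independently, $\Recon(S_1,\cdot)$ on the shares in $S_1$. These act on disjoint registers, so the two reconstructions commute. By correctness each outputs the pure state $\ket{\psi}$, so the joint output is $\ket{\psi}\otimes\ket{\psi}$. A single CPTP map would thus send every pure $\ket{\psi}$ to $\ket{\psi}^{\otimes2}$, contradicting no-cloning.

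\emph{Sufficiency, reduction to self-dual structures.} Assume $\Gamma$ is intersecting. I would first reduce to a \emph{self-dual} access structure $\Gamma''$, meaning $S\in\Gamma''$ if and only if $\overline S\notin\Gamma''$. Following Gottesman's trick, add a phantom user $n+1$ and define $\Gamma''$ on $[n+1]$ by
\[S\in\Gamma''\iff S\cap[n]\in\Gamma\ \text{ or }\ \big(n+1\in S\text{ and }[n]\setminus S\notin\Gamma\big).\]
One checks that $\Gamma''$ is monotone and self-dual; self-duality uses that $\Gamma$ has no two disjoint sets. Suppose we have a \emph{pure} scheme for $\Gamma''$, i.e.\ one whose dealer is an isometry. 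The dealer for $\Gamma$ runs it and discards (traces out) share $n+1$.
\begin{itemize}
\item \emph{Correctness:} every $S\in\Gamma$ is authorized in $\Gamma''$, so it still recovers $\ket{\psi}$.
\item \emph{Security:} for a pure scheme, security is automatic from correctness. If $S\notin\Gamma''$ then $\overline S\in\Gamma''$ recovers $\ket{\psi}$, so by no-cloning (equivalently, decoupling) the state on $S$ is independent of $\ket{\psi}$. A set $S\subseteq[n]$ with $S\notin\Gamma$ is also unauthorized in $\Gamma''$, so it learns nothing.
\end{itemize}

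\emph{Sufficiency, building pure self-dual schemes.} Pure schemes for self-dual structures would be built by concatenation. The base case is the $((3,2,1))$ qutrit threshold scheme of~\cite{cleve1999share}, which is pure and whose access structure is the majority function $\mathrm{maj}_3$. Composition works as follows: if the outer scheme realizes $f$ and each share is re-shared by a pure scheme realizing $g_i$, the result is pure and realizes $f(g_1,\ldots,g_m)$. Reconstruction proceeds bottom-up, and security follows from purity as above. It then suffices to show that every monotone self-dual Boolean function is a read-many composition of $\mathrm{maj}_3$ gates and variables. This is a classical fact; I would prove it by induction on the number of minimal sets, replacing a function $f$ by $\mathrm{maj}(f_1,f_2,f_3)$ for suitable self-dual $f_i$ closer to a dictator.

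\emph{Main obstacle.} I expect this last decomposition to be the hardest step. As a fallback, I would instead invoke a monotone span program realization of $\Gamma''$ over a field. The corresponding stabilizer/CSS-style construction gives a pure scheme directly, using that a self-dual access structure corresponds to a code that equals its own dual.
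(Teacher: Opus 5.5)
Your necessity argument is the paper's own sketch: deal, reconstruct on the two disjoint authorized sets, and obtain $\ket\psi^{\otimes 2}$. For sufficiency the paper gives no argument of its own and simply cites \cite{cleve1999share,gottesman_solo}, so your construction is a genuinely different, self-contained route, and it is sound.

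Your $\Gamma''$ is monotone, self-dual (this uses that $\Gamma$ has no two disjoint sets), and agrees with $\Gamma$ on subsets of $[n]$. Composing isometric schemes keeps them isometric. An isometric scheme whose access structure is self-dual is automatically secure. Justify that last step by the $k=1$ case of Lemma~\ref{lem:neg-residue}, not by no-cloning: after $\overline S$ reconstructs, the residue is a fixed vector, so the state of $S$ is fixed. No-cloning alone only shows that $S$ cannot also recover $\ket\psi$. The same input-independence of the residue is what lets inner schemes reconstruct outer shares that are entangled with the other outer shares, since recovery then extends linearly to entangled inputs.

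The step you flag as the main obstacle is short, so the span-program fallback is unnecessary. For any monotone $f$ on $n\ge 3$ variables, $f=\mathrm{maj}\big(f|_{x_1:=x_2},f|_{x_2:=x_3},f|_{x_3:=x_1}\big)$. To see this, note that two of $x_1,x_2,x_3$ agree, say $x_1=x_2$, which makes the first argument equal to $f(x)$. The other two arguments are $f$ evaluated at points that bracket $x$ coordinatewise, so by monotonicity $f(x)$ lies between them. Identifying variables preserves monotonicity and self-duality, and the only monotone self-dual functions of at most two variables are projections. Induction on the number of variables therefore gives the $\mathrm{maj}_3$-formula; this is Post's theorem that majority generates the monotone self-dual functions.

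To make the dimensions match, use the $((2,3))$ scheme $\ket{s}\mapsto\sum_a\ket{a}\ket{a+s}\ket{a+2s}$ over $\mathbb{F}_p$ for a prime $p\ge\max(3,d)$, embedding $\mathbb{C}^d$ into $\mathbb{C}^p$.

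Compared with the bare citation, your route makes explicit the mechanism the paper only mentions in its footnote, namely that single-copy correctness forces security on complements. The price is formulas with exponentially many leaves, which is harmless here since only feasibility is at stake.
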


For completeness, we briefly sketch the lower-bound showing that a $((n,\Gamma,1))$-QSS scheme does not exist if $\Gamma$ contains a pair of disjoint subsets. If $\Gamma$ contains two sets $S_0,S_1$ that are disjoint, then we can derive a cloning algorithm as follows: given a state $\ket{\psi}$, secret share it to obtain $\rho_{R_1,\cdots,R_n}$. Divide the shares into $\rho_0=\rho_{\{R_i\}_{i\in S_0}}$ and $\rho_1=\rho_{\{R_i\}_{i\in S_1}}$, which by assumption are disjoint systems. Now run $\Recon$ on $\rho_0$ and $\rho_1$; by correctness, each should recover $\ket{\psi}$, giving two copies of the input state $\ket{\psi}$. This violates the no-cloning theorem.

\begin{corollary} A $((n,t,1))$-QTSS scheme exists if and only if $n<2t$.
\end{corollary}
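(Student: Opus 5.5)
This corollary follows by specializing Theorem~\ref{thm:qss-1} to the threshold access structure $\Gamma_{t,n}$. The whole task is therefore to show that $\Gamma_{t,n}$ contains two disjoint sets if and only if $n\ge 2t$. Both directions are elementary counting, and I will handle them separately.

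\emph{Case $n\ge 2t$.} Take $S_0=\{1,\dots,t\}$ and $S_1=\{t+1,\dots,2t\}$. Both are subsets of $[n]$ because $2t\le n$. Both have size $t$, so they lie in $\Gamma_{t,n}$, and they are disjoint. By Theorem~\ref{thm:qss-1}, no $((n,t,1))$-QTSS scheme exists. Equivalently, one can run the cloning argument sketched above directly on $S_0$ and $S_1$.

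\emph{Case $n<2t$.} Suppose $S_0,S_1\in\Gamma_{t,n}$ were disjoint. Then
\[|S_0\cup S_1|=|S_0|+|S_1|\ge 2t>n,\]
which contradicts $S_0\cup S_1\subseteq[n]$. So $\Gamma_{t,n}$ has no two disjoint sets. It is also monotone, being an upward-closed family defined by a size bound. Theorem~\ref{thm:qss-1} then yields a $((n,t,1))$-QTSS scheme.

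There is no real obstacle here: all the quantum content lives in Theorem~\ref{thm:qss-1}, which we may assume. The only point to check is that Theorem~\ref{thm:qss-1} applies to $\Gamma_{t,n}$ as a monotone access structure, which holds by the definition of $\Gamma_{t,n}$.
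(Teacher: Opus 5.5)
Your proposal is correct and follows the paper's own argument: you specialize Theorem~\ref{thm:qss-1} to $\Gamma_{t,n}$, using $\{1,\dots,t\}$ and $\{t+1,\dots,2t\}$ as the disjoint witnesses when $n\ge 2t$. Your counting argument for the case $n<2t$ spells out the ``exactly when'' that the paper leaves implicit.
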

Indeed, $\Gamma_{t,n}$ contains two disjoint authorized sets exactly when
$n\ge 2t$, as witnessed by $\{1,\dots,t\}$ and $\{t+1,\dots,2t\}$.

\paragraph{The no-cloning bound for many copies.} The cloning argument above
generalizes immediately to $k$ copies.

\begin{corollary}\label{cor:nocloning-k}
No $((n,\Gamma,k))$-QSS scheme exists if $\Gamma$ contains $k+1$ pairwise disjoint sets. In the threshold case, no $((n,t,k))$-QTSS scheme exists when $n\ge(k+1)t$.
\end{corollary}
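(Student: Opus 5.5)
The argument reduces the $k$-copy statement to a cloning procedure that would turn $k$ copies into $k+1$ copies. Suppose toward contradiction that a $((n,\Gamma,k))$-QSS scheme exists and that $S_0,S_1,\dots,S_k\in\Gamma$ are pairwise disjoint. We build a map that, on input $\ket{\psi}^{\otimes k}$, outputs $\ket{\psi}^{\otimes (k+1)}$ for every pure state $\ket{\psi}$; the no-cloning theorem forbids such a map.

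\textbf{Construction.} On input $\ket{\psi}^{\otimes k}$, first run $\rho_{R_1,\dots,R_n}\gets\Deal(\ket{\psi}^{\otimes k})$. Because the $S_j$ are pairwise disjoint, the registers $\{R_i\}_{i\in S_0},\dots,\{R_i\}_{i\in S_k}$ are disjoint subsystems. We may therefore run $\Recon(S_j,\cdot)$ on each group separately; these operations act on different registers and commute.

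\textbf{The one point needing care.} We must check that running the reconstructions jointly still yields each copy exactly, given that the shares of different groups may be entangled. Correctness says the marginal output of $\Recon(S_j,\cdot)$ is the pure state $\ket{\psi}$. A pure marginal cannot be correlated with anything else, so the joint output factorizes as $\ket{\psi}^{\otimes(k+1)}\otimes\sigma$. Here $\sigma$ is a leftover residual state; it may depend on $\psi$, but we simply trace it out. The result is a fixed quantum channel mapping $\ket{\psi}^{\otimes k}\mapsto\ket{\psi}^{\otimes(k+1)}$ for all pure $\ket{\psi}$.

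\textbf{Contradiction with no-cloning.} Such a channel is impossible. One way to see this: take two states with overlap $|\langle\psi|\phi\rangle|=c\in(0,1)$. The fidelity of the inputs is $c^k$, and the fidelity of the outputs would be $c^{k+1}<c^k$. This contradicts the monotonicity of fidelity under CPTP maps, which forbids fidelity from decreasing. (Alternatively, cite the standard $k\to k+1$ no-cloning theorem directly.) Security is never used, so the obstruction holds even for schemes that only satisfy correctness.

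\textbf{Threshold case.} If $n\ge (k+1)t$, the sets $\{jt+1,\dots,(j+1)t\}$ for $j=0,\dots,k$ are $k+1$ pairwise disjoint sets, each of size $t$, so all lie in $\Gamma_{t,n}$. The general statement then applies.
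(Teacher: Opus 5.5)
Your proposal is correct and follows the paper's argument: deal the $k$ copies, reconstruct separately on the $k+1$ pairwise disjoint authorized sets to obtain $k+1$ copies, and contradict no-cloning, with the threshold case witnessed by consecutive blocks of $t$ users. You also spell out two steps the paper leaves implicit, and both are sound: pure marginals force the joint output to factor as $\ket{\psi}^{\otimes(k+1)}\otimes\sigma$, and fidelity monotonicity gives the $k\to k+1$ no-cloning contradiction.
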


Notice that if $S_0,\dots,S_k$ are pairwise disjoint authorized sets, each can run
$\Recon$ on its own shares and recover $\ket\psi$, producing $k+1$ copies from
the $k$ given to $\Deal$ and violating no-cloning. For $\Gamma_{t,n}$, the sets
$\{1,\dots,t\},\{t+1,\dots,2t\},\dots$ give $k+1$ pairwise disjoint authorized
sets exactly when $n\ge(k+1)t$.

\subsection{Graph Theory}
In order to discuss the union-of-access-structures approach in Section~\ref{sec:positive}, we establish some key definitions and results in graph theory.

\begin{definition}[Vertex Coloring]
    A vertex coloring of a graph is an assignment of colors to its vertices such that no vertices connected by an edge share the same color.
\end{definition}

\begin{definition}[Chromatic Number]
    The chromatic number $\chi(G)$ of a graph $G$ is the minimum number of colors required to produce a valid vertex coloring of $G$.
\end{definition}

A graph with chromatic number $\chi(G) \leq 2$ is called a \textit{bipartite graph}. Bipartite graphs have been extensively studied, and are characterized by the following theorem:

\begin{theorem}{\label{thm:bipartite_odd}}
    A graph is bipartite if and only if it has no cycle (i.e., a closed loop formed by the edges) of odd length.
\end{theorem}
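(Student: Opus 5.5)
The plan is to prove the two directions separately, working one connected component at a time for the harder direction; this is harmless because a graph is $2$-colorable exactly when each of its components is, and every cycle lies inside a single component.

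For the forward direction, suppose $G$ has a valid $2$-coloring and let $v_0,v_1,\dots,v_{\ell-1},v_0$ be a cycle of length $\ell$. Consecutive vertices are joined by an edge, so their colors must differ. The colors therefore alternate along the cycle, and $v_i$ has the color of $v_0$ exactly when $i$ is even. The closing edge $v_{\ell-1}v_0$ forces $v_{\ell-1}$ to have the color opposite to $v_0$, so $\ell-1$ is odd. Hence $\ell$ is even, and $G$ has no odd cycle.

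For the reverse direction, assume $G$ has no odd cycle. Fix a component and a root $r$ in it, and let $d(v)$ denote the graph distance from $r$ to $v$. Color $v$ with color $d(v)\bmod 2$. It remains to check that no edge $uv$ joins two vertices of the same color.

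Suppose, for contradiction, that $uv$ is an edge with $d(u)\equiv d(v)\pmod 2$. Take a shortest path $P_u$ from $r$ to $u$ and a shortest path $P_v$ from $r$ to $v$. Walking from $r$ to $u$ along $P_u$, crossing the edge $uv$, and returning to $r$ along $P_v$ reversed gives a closed walk of length $d(u)+d(v)+1$. This length is odd.

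The remaining step, which I expect to be the only nontrivial one, is to show that every closed walk of odd length contains an odd cycle. I will prove this by strong induction on the length of the walk.
\begin{itemize}
\item If the closed walk repeats no vertex other than its start and end point, it is itself a cycle. Its length is odd, hence at least $3$, since a closed walk of length $1$ would require a loop.
\item Otherwise some vertex $w$ appears twice at distinct positions within the walk. Splitting the walk at these two occurrences of $w$ produces two closed walks, each strictly shorter than the original. Their lengths add up to the original odd length, so one of them has odd length. The induction hypothesis then gives an odd cycle inside that shorter walk.
\end{itemize}

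Applying this to the odd closed walk built from $P_u$, $uv$ and $P_v$ produces an odd cycle in $G$, contradicting the assumption. So the parity coloring is valid on every component. Taking the union of these colorings over all components gives a $2$-coloring of $G$, so $\chi(G)\le 2$, as claimed in Theorem~\ref{thm:bipartite_odd}.
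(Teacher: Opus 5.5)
Your proof is correct and complete. The forward direction alternates colors around a cycle. The reverse direction uses a distance-parity coloring on each component, reduced to the lemma that every odd closed walk contains an odd cycle. This is the standard argument.

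The paper gives no proof of Theorem~\ref{thm:bipartite_odd}; it cites it as a classical fact, so there is no argument of the paper's to compare against. Your inductive lemma is exactly the fact stated without proof at the end of the proof of Proposition~\ref{prop:neg-k2} (``a graph has an odd closed walk iff it has an odd cycle''). So your argument also justifies that step.

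One point of phrasing: in the splitting step, take the two occurrences of $w$ at positions $0\le i<j\le\ell$ with $(i,j)\neq(0,\ell)$. That condition is what guarantees both pieces have length strictly between $0$ and $\ell$. Your ``Otherwise'' means this implicitly; it is worth stating.
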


With these definitions, we can now focus on a specific type of graphs especially relevant to QTSS schemes.

\begin{definition}[Access Structure Graph]\label{def:ac_complement}
    The \emph{access structure graph} $G_\Gamma$ of an access structure $\Gamma$
    is the graph whose vertices are the minimal authorized sets of $\Gamma$, with
    two vertices adjacent if and only if the corresponding sets are disjoint.
\end{definition}

\begin{definition}[Kneser Graph]{\label{def:kneser}}
    A Kneser Graph, represented by $K(m,\ell)$, is the graph whose vertices correspond to the $\ell$-element subsets of a set of $m$ elements, and where two vertices are adjacent if and only if the two corresponding subsets are disjoint. 
\end{definition}

For $\Gamma_{t,n}$ the minimal authorized sets are exactly the $t$-subsets of $[n]$, so $G_{\Gamma_{t,n}}$ is the Kneser graph $K(n,t)$. We now state the theorem first proved in \cite{lovasz1978kneser} that we use to prove our main positive result, Theorem \ref{thm:pos-bound} in Section~\ref{sec:positive}.

\begin{theorem}{\label{thm:kneser_conjecture}}
    For $m\ge 2\ell$, the chromatic number of a Kneser graph $K(m,\ell)$ is $\chi(K(m,\ell)) = m - 2\ell + 2$. 
\end{theorem}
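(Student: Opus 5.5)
The statement is the classical Kneser theorem of Lov\'asz. The paper invokes it rather than proving it, and its proof is topological, so here is how I would organize a self-contained argument.

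\textbf{Upper bound.} I would exhibit an explicit coloring with $m-2\ell+2$ colors. Identify the ground set with $[m]$. Give an $\ell$-subset $A$ the color $\min(A)$ whenever $\min(A)\le m-2\ell+1$, and the color $m-2\ell+2$ otherwise.
\begin{itemize}
\item Two sets sharing a minimum element intersect, so each of the first $m-2\ell+1$ color classes is independent.
\item The sets in the last class all lie inside $\{m-2\ell+2,\dots,m\}$. This set has $2\ell-1$ elements, so any two $\ell$-subsets of it must intersect.
\end{itemize}
This is exactly the coloring used in the paper's union-of-access-structures overview.

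\textbf{Lower bound.} I would follow Greene's simplification of Lov\'asz's argument via the Borsuk--Ulam theorem. Set $d=m-2\ell+1$. Suppose, for contradiction, that there is a proper coloring $c$ using $d$ colors, and place $m$ points on the sphere $S^{d}\subset\mathbb R^{d+1}$ in general position, meaning no $d+1$ of them lie on a great $(d-1)$-sphere (a hyperplane through the origin). For $i=1,\dots,d$, define $U_i$ to be the set of $x\in S^{d}$ whose open hemisphere $H(x)=\{y:\langle x,y\rangle>0\}$ contains some $\ell$-subset of color $i$. Define $F=S^{d}\setminus(U_1\cup\dots\cup U_d)$. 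Each $U_i$ is open, so $F$ is closed, and the $d+1$ sets $U_1,\dots,U_d,F$ cover $S^{d}$.

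By the Lyusternik--Schnirelmann form of Borsuk--Ulam, some one of these sets contains an antipodal pair $x,-x$.
\begin{itemize}
\item If $x,-x\in U_i$, then the disjoint hemispheres $H(x)$ and $H(-x)$ contain two disjoint $\ell$-sets of the same color $i$. That is a monochromatic edge, contradicting properness.
\item If $x,-x\in F$, then each of $H(x)$ and $H(-x)$ contains at most $\ell-1$ points, since otherwise it would contain an $\ell$-subset, which has some color. So at least $m-2\ell+2=d+1$ points lie on the great sphere $\{y:\langle x,y\rangle=0\}$, contradicting general position.
\end{itemize}

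\textbf{Main obstacle.} The only genuine difficulty is the topological input, the Lyusternik--Schnirelmann covering theorem. It requires $d$ of the sets to be open (or all closed), which is why I would take the $U_i$ to be open and let $F$ be the one closed set. I would cite this theorem as standard rather than prove it. The remaining steps are routine verification; the general-position hypothesis can be secured by placing the points on the moment curve.
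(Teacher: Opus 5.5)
Your proof is correct. It also supplies something the paper does not: the paper never proves this theorem, but imports it as a black box from Lov\'asz, adding only a remark that converts his $K(2a+b,a)$ parametrization to $K(m,\ell)$.

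Your upper bound is the same min-element coloring that the paper's technical overview uses to build $((n,t,k))$-QTSS schemes. Your lower bound is Greene's short Borsuk--Ulam argument, rather than Lov\'asz's original route through the connectivity of the neighborhood complex.

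The details check out:
\begin{itemize}
\item Each $U_i$ is open, being a finite union of sets cut out by strict inequalities $\langle x,p\rangle>0$.
\item The Lyusternik--Schnirelmann variant with $d$ open sets plus their complement follows from Borsuk--Ulam applied to $x\mapsto(\operatorname{dist}(x,S^d\setminus U_i))_{i\le d}$.
\item The count $m-2(\ell-1)=d+1$ in the $F$ case is right.
\end{itemize}
One precision: what you need is \emph{linear} general position, i.e.\ any $d+1$ of the points linearly independent in $\mathbb R^{d+1}$. You get this from the points $(1,t,\dots,t^d)$ with distinct $t$, normalized to the sphere, via the Vandermonde determinant.

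Citing the theorem buys the paper brevity. Your route makes visible where the difficulty lies: the lower bound is genuinely topological. This matches the paper's later point that non-$k$-colorability need not have short combinatorial witnesses. It also explains why the purely combinatorial Clique Path obstruction cannot reach the Kneser bound: for example, $K(6,2)$ has $\chi=4$, yet there is no inconsistent Clique Path for $k=3$.
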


Note that for $m < 2\ell$, the graph has no edges so $\chi = 1$.

\begin{remark}
    Lovász originally showed that the chromatic number of the Kneser
graph $K(a,b)$ is $b+2$. However, he uses an alternative parametrization: in his
definition the total number of elements is $2a+b$ and the subsets have size $a$,
so the formula in Theorem~\ref{thm:kneser_conjecture} gives
$(2a+b)-2a+2=b+2$. In this paper we use the more intuitive definition in
Definition~\ref{def:kneser}.
\end{remark} 

\section{Positive Results}{\label{sec:positive}}
We present our positive results. How can we share multiple copies of a quantum secret such that the no-cloning boundary is not violated? A natural starting point is to decompose the global access structure into smaller, single-copy access structures. Under this approach, which we call the ``union-of-access-structures" approach, we share $k$ independent copies of the input secret by running $k$ parallel single-copy schemes, ensuring the allocation of the shares does not violate the no-cloning theorem.

\subsection{The Dual-Copy Case ($k = 2$)}
Let $\Gamma$ be the desired access structure. Recall the access structure graph $G_\Gamma$ from Definition~\ref{def:ac_complement}.

\begin{theorem}\label{thm:k2-positive}
    A $((n,\Gamma,2))$-QSS scheme is realizable via the union-of-access-structures
    approach if and only if the access structure graph $G_\Gamma$ is bipartite
    ($2$-colorable).
\end{theorem}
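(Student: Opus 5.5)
We prove the two directions separately. Throughout, ``realizable via the union-of-access-structures approach'' means: there are access structures $\Gamma_0,\Gamma_1$, each admitting a $((n,\Gamma_b,1))$-QSS scheme, such that $\Gamma=\Gamma_0\cup\Gamma_1$. The dealer shares copy $b$ of $\ket\psi$ with the scheme for $\Gamma_b$, and each user receives both shares.

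\emph{($\Leftarrow$)} Fix a proper $2$-coloring $c$ of $G_\Gamma$. Let $\mathcal M_b$ be the minimal authorized sets of color $b$, and let $\Gamma_b$ be the monotone closure of $\mathcal M_b$. Since every minimal set of $\Gamma$ has some color, $\Gamma_0\cup\Gamma_1=\Gamma$.

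Next we check that $\Gamma_b$ has no two disjoint sets. Suppose $T,T'\in\Gamma_b$ were disjoint. Then they contain $M,M'\in\mathcal M_b$, which are also disjoint. Hence $M$ and $M'$ are distinct and adjacent in $G_\Gamma$, yet have the same color, a contradiction. By Theorem~\ref{thm:qss-1}, a $((n,\Gamma_b,1))$-QSS scheme therefore exists for each $b$.

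The combined scheme is as follows.
\begin{itemize}
\item \textbf{Dealing.} Apply $\Deal_b$ to the $b$-th copy of $\ket\psi$, and give user $i$ both of its shares.
\item \textbf{Reconstruction.} On input $S\in\Gamma$, pick some $b$ with $S\in\Gamma_b$ and run $\Recon_b$ on the $b$-th shares of $S$.
\end{itemize}
Correctness follows from correctness of scheme $b$, since the two dealings act on separate tensor factors.

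For security, take $S\notin\Gamma$. Then $S\notin\Gamma_0$ and $S\notin\Gamma_1$. The dealt state on $\ket{\psi}^{\otimes 2}$ is the product $\Deal_0(\ket\psi)\otimes\Deal_1(\ket\psi)$, so the reduced state on $S$ is $\rho^{0}_S(\psi)\otimes\rho^{1}_S(\psi)$. Each factor is independent of $\psi$ by security of the corresponding scheme. A two-step hybrid then gives $\rho^0_S(\psi^0)\otimes\rho^1_S(\psi^0)=\rho^0_S(\psi^1)\otimes\rho^1_S(\psi^0)=\rho^0_S(\psi^1)\otimes\rho^1_S(\psi^1)$.

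\emph{($\Rightarrow$)} Suppose $\Gamma=\Gamma_0\cup\Gamma_1$ with each $\Gamma_b$ admitting a $1$-copy scheme. By Theorem~\ref{thm:qss-1}, neither $\Gamma_b$ contains two disjoint sets. Color each minimal authorized set $M$ of $\Gamma$ with some $b$ such that $M\in\Gamma_b$; such a $b$ exists because $M\in\Gamma$. If $M,M'$ are adjacent in $G_\Gamma$, they are disjoint, so they cannot lie in the same $\Gamma_b$ and thus receive different colors. This is a proper $2$-coloring, so $G_\Gamma$ is bipartite.

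The main point needing care is the security hybrid. It relies on the combined dealer being a tensor product of the two independent dealers, so that the reduced state on $S$ factorizes. I would also state explicitly that the decomposition $\Gamma_0,\Gamma_1$ need not be monotone closures of a coloring a priori. The ($\Rightarrow$) direction handles this by coloring only the minimal sets.
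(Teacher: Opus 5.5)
Your proposal is correct and follows the paper's proof essentially step for step. The color classes of $G_\Gamma$ generate sub-access structures with no disjoint pair, Theorem~\ref{thm:qss-1} supplies the single-copy schemes, and the converse reads a proper $2$-coloring off the decomposition $\Gamma=\Gamma_0\cup\Gamma_1$. You are slightly more careful than the paper in two places: you reduce a disjoint pair in the monotone closure to a disjoint pair of minimal sets, and you pick a single color for a minimal set that lies in both $\Gamma_0$ and $\Gamma_1$.
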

\begin{proof}
    $(\impliedby)$ If $G_\Gamma$ is bipartite, we can partition its vertex set into two independent sets $V_1$ and $V_2$. Because no two vertices within $V_i$ share an edge, we know that no two authorized sets in $V_i$ are disjoint. Thus the sub-access structures $\Gamma_1$ and $\Gamma_2$ formed by $V_1$ and $V_2$ respectively satisfy the no-cloning theorem. Consequently, by Theorem~\ref{thm:qss-1} there exist valid single-copy schemes to realize $\Gamma_1$ and $\Gamma_2$. Distributing one copy of the secret according to $\Gamma_1$ and the second copy according to $\Gamma_2$ gives a valid $((n,\Gamma,2))$-QSS scheme. Correctness follows, since any authorized set lies in $\Gamma_1$ or $\Gamma_2$ and reconstructs from the corresponding sub-sharing. For security, an unauthorized set of $\Gamma$ is unauthorized for both $\Gamma_1$ and $\Gamma_2$, so by the security of each single-copy scheme, its shares from each sub-sharing are independent of the secret. Since the two sub-sharings are run independently, its joint state is independent of the secret as well. 

    $(\implies)$ Conversely, suppose a valid $((n,\Gamma,2))$-QSS scheme is obtained by this approach, i.e.\ $\Gamma = \Gamma_1 \cup \Gamma_2$ where copy $i$ is shared according to a single-copy scheme for $\Gamma_i$. By Theorem~\ref{thm:qss-1}, each $\Gamma_i$ contains no two disjoint authorized sets. In $G_\Gamma$, the vertices corresponding to $\Gamma_1$ and $\Gamma_2$ must therefore belong to two independent sets, which implies that $G_\Gamma$ is bipartite. 
\end{proof}
As a concrete example, consider a $((4,2,2))$-QTSS scheme. The authorized sets are the pairs $\{1,2\}, \{2,3\}, \allowbreak \{3,1\},  \{1,4\}, \{2,4\}, \{3,4\}$, and its access structure graph is bipartite: it consists of three disjoint edges, such as $\{1,2\}$--$\{3,4\}$. This guarantees a valid construction by partitioning the pairs into $\Gamma_1 = \{\{1,2\}, \{2,3\}, \{3,1\}\}$ and $\Gamma_2 = \{\{1,4\}, \{2,4\}, \{3,4\}\}$.

\subsection{Generalizing to Arbitrary $k$}{\label{sec:gen-to-arb-k}}
We can easily generalize this framework to an arbitrary number of copies $k \geq 2$. In this setting, the existence of a $((n,\Gamma,k))$-QSS scheme reduces to the $k$-colorability of its access structure graph. 

\begin{theorem}\label{thm:acc_comp_k_colorable}
    A $((n,\Gamma,k))$-QSS scheme is realizable via the union-of-access-structures
    approach if and only if $G_\Gamma$ is $k$-colorable.
\end{theorem}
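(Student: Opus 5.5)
The plan is to mirror the proof of Theorem~\ref{thm:k2-positive}, replacing the bipartition by a partition into $k$ color classes. Throughout, a scheme \emph{realized via the union-of-access-structures approach} means the following: we have access structures $\Gamma_1,\dots,\Gamma_k$ (each monotone, each taken as the upward closure of its generating sets) with $\Gamma=\Gamma_1\cup\cdots\cup\Gamma_k$. Copy $i$ of $\ket\psi$ is dealt by an independent single-copy $((n,\Gamma_i,1))$-QSS scheme, and each user receives the tuple of its $k$ shares.

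\textbf{($\Leftarrow$).} Fix a proper $k$-coloring $c$ of $G_\Gamma$. For each color $i$, let $\Gamma_i$ be the upward closure of the minimal authorized sets of color $i$; this is monotone. Any two sets in $\Gamma_i$ contain minimal sets of the same color, and those minimal sets are non-adjacent in $G_\Gamma$, hence intersect. So the two sets intersect as well, and $\Gamma_i$ has no two disjoint members. By Theorem~\ref{thm:qss-1}, a $((n,\Gamma_i,1))$-QSS scheme exists for each $i$ (a color class may be empty, in which case that copy is simply discarded). Since every set in $\Gamma$ contains a minimal authorized set, which received some color, we get $\Gamma=\bigcup_i\Gamma_i$.

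\emph{Correctness.} An authorized $S$ finds an $i$ with $S\in\Gamma_i$ and runs the $i$-th reconstruction on its $i$-th shares, recovering $\ket\psi$.

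\emph{Security.} An unauthorized $S$ lies in no $\Gamma_i$, so each marginal $\rho^{(i)}_S$ is independent of $\ket\psi$. The $k$ dealings act on separate copies of the product input $\ket\psi^{\otimes k}$, so the joint state of $S$ is $\bigotimes_i\rho^{(i)}_S$. A hybrid argument, swapping one tensor factor at a time from $\ket{\psi^0}$ to $\ket{\psi^1}$, then shows that $S$'s joint state is the same for both secrets.

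\textbf{($\Rightarrow$).} Suppose $\Gamma=\bigcup_i\Gamma_i$ with each $\Gamma_i$ admitting a single-copy scheme. By Theorem~\ref{thm:qss-1}, no $\Gamma_i$ contains two disjoint sets. Color each minimal authorized set $A$ of $\Gamma$ by some $i$ with $A\in\Gamma_i$; such an $i$ exists since $A\in\Gamma$. If two adjacent vertices $A,B$ of $G_\Gamma$ received the same color $i$, then $A,B\in\Gamma_i$ would be disjoint, a contradiction. Hence this is a proper $k$-coloring of $G_\Gamma$.

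The only step requiring care is security in ($\Leftarrow$): we must check that the product structure of the input survives independent dealing, so that the joint marginal on $S$ really factorizes. Everything else is bookkeeping identical to the $k=2$ case.
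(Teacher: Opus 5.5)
Your proof is correct and follows the paper's argument essentially step for step. In the forward direction, a proper $k$-coloring gives $k$ sub-access structures with no two disjoint members, each realizable by Theorem~\ref{thm:qss-1}; correctness and security then follow from the independent sub-sharings. In the converse, the $\Gamma_i$ give independent sets that cover the vertices of $G_\Gamma$. You also spell out several details the paper leaves implicit: that upward closures of a color class stay pairwise intersecting, that empty color classes are harmless, and that a minimal set lying in several $\Gamma_i$ must be assigned just one color.
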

\begin{proof}
    $(\impliedby)$ Fix a proper $k$-coloring and let $V_1,\dots,V_k$ be the
    color classes, with $\Gamma_1,\dots,\Gamma_k$ the sub-access structures
    they generate. No two sets within a class are adjacent, hence none are
    disjoint, so by Theorem~\ref{thm:qss-1} each $\Gamma_i$ admits a
    single-copy scheme. Share copy $i$ according to $\Gamma_i$ and give each
    user their shares from all $k$ schemes. Correctness follows since any authorized set lies in some
    $\Gamma_i$ and reconstructs from that sub-sharing. An unauthorized set of
    $\Gamma$ is unauthorized for every $\Gamma_i$, so its shares from each
    sub-sharing are independent of the secret. As the $k$ sub-sharings are run
    independently, its joint state is independent of the secret. Hence security also holds.

    $(\implies)$ If $\Gamma=\Gamma_1\cup\cdots\cup\Gamma_k$ with copy $i$
    shared by a single-copy scheme for $\Gamma_i$, then by
    Theorem~\ref{thm:qss-1} no $\Gamma_i$ contains two disjoint sets, so each
    corresponds to an independent set of $G_\Gamma$. These $k$ independent
    sets cover the vertices, giving a proper $k$-coloring.
\end{proof}

We now specialize to the threshold regime. By framing the problem as a graph coloring task, we can establish the exact boundary on the number of participants $n$ for which a $((n,t,k))$-QTSS scheme can exist. The access structure graph for a $((n,t,k))$-QTSS scheme is exactly the Kneser graph $K(n,t)$ (Definition~\ref{def:kneser}), where the vertices are $t$-element subsets of an $n$-element set, adjacent if and only if their corresponding subsets are disjoint. 

Using Lovász's determination of the chromatic number of a Kneser graph (Theorem~\ref{thm:kneser_conjecture}), we obtain our main result:

\begin{theorem}\label{thm:pos-bound}
    A $((n,t,k))$-QTSS scheme is realizable by the union-of-access-structures approach if and only if 
    \[
        n \leq 2t - 2 + k
    \]
\end{theorem}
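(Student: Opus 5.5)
The plan is to reduce the statement to a chromatic-number computation via Theorem~\ref{thm:acc_comp_k_colorable}. The minimal authorized sets of $\Gamma_{t,n}$ are exactly the $t$-subsets of $[n]$, so $G_{\Gamma_{t,n}}=K(n,t)$. By Theorem~\ref{thm:acc_comp_k_colorable}, a $((n,t,k))$-QTSS scheme is realizable by the union-of-access-structures approach if and only if $\chi(K(n,t))\le k$. It therefore suffices to show that $\chi(K(n,t))\le k$ holds exactly when $n\le 2t-2+k$.

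I will split into two cases according to whether Theorem~\ref{thm:kneser_conjecture} applies. If $n\ge 2t$, then Theorem~\ref{thm:kneser_conjecture} gives $\chi(K(n,t))=n-2t+2$. The condition $n-2t+2\le k$ rearranges directly to $n\le 2t-2+k$, as required.

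If $n<2t$, then any two $t$-subsets of $[n]$ intersect, so $K(n,t)$ has no edges. Hence $\chi(K(n,t))=1\le k$, valid because $k\ge 1$, and the scheme is realizable (this is just the single-copy scheme of Theorem~\ref{thm:qss-1} reused). I must also check that the inequality $n\le 2t-2+k$ holds in this case. Since $n\le 2t-1$ and $k\ge 1$, we get $n\le 2t-1\le 2t-2+k$. So in both cases realizability is equivalent to $n\le 2t-2+k$.

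The only delicate point is the boundary bookkeeping. One check is that the formula $n-2t+2$ gives $2$ at $n=2t$, where the graph is a perfect matching and indeed needs two colors, so the two cases fit together consistently. The other is the degenerate case $t=0$ or $t>n$: there $\Gamma$ is empty or everything, so I will assume $1\le t\le n$ as implicit in the threshold setting. The substantive content is entirely Lovász's theorem, which we are permitted to assume, so I do not expect any real obstacle.
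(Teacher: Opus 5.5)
Your proposal is correct and follows the paper's proof: reduce via Theorem~\ref{thm:acc_comp_k_colorable} to $\chi(K(n,t))\le k$, then apply Lovász's formula $\chi(K(n,t))=n-2t+2$. Your separate treatment of the case $n<2t$, where that formula does not apply, fills in a step the paper's proof leaves implicit.
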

\begin{proof}
    By Theorem \ref{thm:acc_comp_k_colorable}, we have that a $((n,t,k))$-QTSS scheme is realizable by the union-of-access-structures approach if and only if the access structure graph $K(n,t)$ is $k$-colorable, which requires $\chi(K(n,t)) \leq k$. Substituting Lovász's result (Theorem~\ref{thm:kneser_conjecture}), we have 
    \[
        n - 2t + 2 \leq k\implies n \leq 2t - 2 + k
    \]
    This completes the proof.
\end{proof}

\textit{Remark:} This combinatorial bound shows that the $k = 1$ bound is $n < 2t$ (or equivalently, $n \leq 2t - 1$), whereas adding $k$ copies of the state allows the network size to scale linearly with $k$ without violating no-cloning constraints.

\section{Negative Results}
In Section~\ref{sec:positive}, we determined exactly which schemes are realizable by the union-of-access-structures approach: those whose access structure graph $G_\Gamma$ is $k$-colorable. However, that bound is only a statement about one construction. The question remains---is there some cleverer protocol that evades this bound, perhaps one that does not hard-partition the $k$ copies among $k$ sub-access structures and instead utilizes a jointly entangled encoding with recovery operations acting coherently across copies?

In this section, we show that a portion of the bound is not just an artifact of the construction. We give a criterion, phrased as a color propagation on
$G_\Gamma$, which certifies that \emph{no scheme} realizes a given
access structure. The criterion is purely combinatorial: we seed a set of $k$ pairwise disjoint authorized sets with the labels $1,\dots,k$,
propagate labels by elimination, and look for a conflict. The underlying
argument uses nothing about the scheme beyond unitarity of
reconstruction.
 
We then apply the criterion to threshold structures and recover, in a few lines,
the impossibility of $((kt+1,t,k))$-QTSS.
 
Throughout this section we assume $k\ge 2$; for $k=1$ there is only one label
and the criterion is vacuous, consistent with Theorem~\ref{thm:qss-1}.

\subsection{Canonical Form}\label{sec:canonical}

It is convenient to replace $\Deal$ and $\Recon$ by isometries. 

\begin{lemma}[Purified dealer]\label{lem:neg-purified}
Any $((n,\Gamma,k))$-QSS scheme may be assumed to have $\Deal$ given by an isometry
\[
    \hat U_D \;:\; \mathcal H^{\otimes k}\;\longrightarrow\;
    \mathcal H_1\otimes\cdots\otimes\mathcal H_n\otimes\mathcal H_E ,
\]
where $\mathcal H\cong\mathbb C^d$ is the secret space and $\mathcal H_E$ is a
register retained by no user. We write
$\ket{\Psi_\psi}:=\hat U_D\ket{\psi}^{\otimes k}$ for the global pure state, so
that $\rho_{R_1,\dots,R_n}=\Tr_E\ket{\Psi_\psi}\bra{\Psi_\psi}$.
\end{lemma}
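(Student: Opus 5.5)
We will obtain the purified dealer by applying the Stinespring dilation theorem to the original dealer, and then check that correctness and security carry over verbatim.

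First, by definition $\Deal$ is a quantum operation, i.e.\ a CPTP map $\mathcal D$ from density operators on $\mathcal H^{\otimes k}$ to density operators on $\mathcal H_1\otimes\cdots\otimes\mathcal H_n$. Stinespring's theorem gives an auxiliary Hilbert space $\mathcal H_E$ and an isometry $\hat U_D:\mathcal H^{\otimes k}\to\mathcal H_1\otimes\cdots\otimes\mathcal H_n\otimes\mathcal H_E$ such that
\[
\mathcal D(\sigma)=\Tr_E\bigl(\hat U_D\,\sigma\,\hat U_D^\dagger\bigr)
\]
for every input $\sigma$. In finite dimensions one may take $\dim\mathcal H_E\le d^k\prod_i\dim\mathcal H_i$. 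Concretely, with Kraus operators $\{K_j\}$ for $\mathcal D$, set $\hat U_D\ket{\phi}=\sum_j K_j\ket{\phi}\otimes\ket{j}_E$. Then $\hat U_D^\dagger\hat U_D=\sum_jK_j^\dagger K_j=I$, so $\hat U_D$ is an isometry, and tracing out $E$ recovers $\mathcal D$.

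Next, define the new dealer to apply $\hat U_D$ to $\ket{\psi}^{\otimes k}$, hand register $R_i$ to user $i$, and discard $E$ (give it to no one). For every input, the joint state of the users' registers is $\Tr_E\ket{\Psi_\psi}\bra{\Psi_\psi}=\mathcal D(\ket\psi\bra\psi^{\otimes k})=\rho_{R_1,\dots,R_n}$, exactly the old dealer's output. Both correctness and security are defined purely in terms of this reduced state (and its further marginals on $\{R_i\}_{i\in S}$), together with the unchanged $\Recon$. Hence both properties hold for the new scheme: any authorized $S$ recovers $\ket\psi$, and any unauthorized $S$ sees identical marginals for different secrets.

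There is no real obstacle. The only point requiring a moment's care is that $E$ must be held by no user. Otherwise, giving the purification to some user would enlarge that user's view and could break security, and the statement explicitly excludes this. A secondary remark is that the analogous purification of $\Recon$ (as an isometry with a leftover register) follows the same way, but that is not needed for this lemma.
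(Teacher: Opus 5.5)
Your proposal is correct and matches the paper's proof: take a Stinespring dilation of the $\Deal$ channel, let the dilating register $\mathcal H_E$ be held by no user, and observe that correctness and security refer only to the users' reduced state, which is unchanged. Your explicit Kraus-operator construction and the remark that $E$ must go to no user only add detail to the same argument.
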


\begin{proof}
$\Deal$ is a channel, so take any Stinespring dilation and let $\mathcal H_E$ be
the dilating register. Correctness and security refer only to the users'
registers, and are hence unaffected.
\end{proof}
 
\begin{lemma}[Unitary reconstruction]\label{lem:neg-unitary}
Let $S\in\Gamma$ and write $\mathcal H_S=\bigotimes_{i\in S}\mathcal H_i$
(enlarged by a private ancilla in a fixed state, which we absorb into
$\mathcal H_S$). Then there is a unitary $\hat U_S$ on $\mathcal H_S$ and a
factorization $\mathcal H_S\cong\mathcal K_S\otimes\mathcal R_S$ with
$\dim\mathcal K_S=d$, such that for every pure $\ket\psi$,
\[
    \big(\hat U_S\otimes I\big)\ket{\Psi_\psi}
      \;=\; \ket{\psi}_{\mathcal K_S}\otimes\ket{\Theta^S_\psi} ,
\]
where $\ket{\Theta^S_\psi}$ is a pure state on $\mathcal R_S$ together with all
registers outside $S$.
\end{lemma}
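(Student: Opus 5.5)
We would dilate $\Recon(S,\cdot)$ to a unitary and then show that the output register splits off in exactly the form claimed. Start from $\Recon(S,\cdot)$, a channel on $\mathcal H_S$ whose output is a $d$-dimensional register. Take a Stinespring dilation and append a private ancilla in a fixed state, so that the channel becomes a unitary $\hat U_S$ on the enlarged space, absorbed into $\mathcal H_S$ as the statement allows. Pad dimensions if needed so that $\dim \mathcal H_S$ is a multiple of $d$. We then fix a factorization $\mathcal H_S\cong\mathcal K_S\otimes\mathcal R_S$ in which $\mathcal K_S$ is the output register of $\Recon$ and $\mathcal R_S$ collects the discarded part of the dilation. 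By construction,
\[
\Tr_{\mathcal R_S}\big[(\hat U_S\otimes I)\ket{\Psi_\psi}\bra{\Psi_\psi}(\hat U_S\otimes I)^\dagger\big]\;\text{restricted to }\mathcal K_S
\]
equals the output of $\Recon(S,\rho_{\{R_i\}_{i\in S}})$. Here the partial trace also removes all registers outside $S$, including $E$, since $\rho_{\{R_i\}_{i\in S}}$ is the marginal of $\ket{\Psi_\psi}$.

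Next we invoke correctness (Definition of QSS Correctness). For every pure $\ket\psi$, the reduced state on $\mathcal K_S$ of the global pure state $\ket{\Phi_\psi}:=(\hat U_S\otimes I)\ket{\Psi_\psi}$ is exactly $\ket\psi\bra\psi$. The key step is the standard fact that a pure bipartite state whose marginal on one factor is pure must be a product. Write the Schmidt decomposition of $\ket{\Phi_\psi}$ across $\mathcal K_S$ versus everything else. A pure marginal forces Schmidt rank one, so $\ket{\Phi_\psi}=\ket\psi_{\mathcal K_S}\otimes\ket{\Theta^S_\psi}$ for some unit vector $\ket{\Theta^S_\psi}$ on $\mathcal R_S$ together with all registers outside $S$ (including $E$). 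Any global phase can be absorbed into $\ket{\Theta^S_\psi}$. This establishes the statement.

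We do not expect a substantive obstacle. The only care needed is bookkeeping. First, the ancilla must be placed inside $\mathcal H_S$, and the Stinespring environment must be identified with $\mathcal R_S$ so that $\hat U_S$ is a genuine unitary on $\mathcal H_S$ rather than an isometry into a larger space. Second, we must confirm that $\Recon$ acting on the marginal agrees with $\hat U_S\otimes I$ acting on the purification $\ket{\Psi_\psi}$, which holds by linearity of the partial trace. We note that $\ket{\Theta^S_\psi}$ may a priori depend on $\psi$. Showing that it does not is a later step, via the superposition argument sketched in the technical overview, and it is not needed here.
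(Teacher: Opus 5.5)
Your proposal is correct and follows the paper's proof essentially step for step. Like the paper, you dilate $\Recon(S,\cdot)$ to a unitary on $\mathcal H_S$ enlarged by the ancilla and environment, use correctness to get the pure marginal $\ket\psi\bra\psi$ on $\mathcal K_S$, and conclude that the global state factorizes; your Schmidt-rank argument and dimension padding just make explicit the bookkeeping the paper leaves implicit.
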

 
\begin{proof}
Dilate $\Recon(S,\cdot)$ to an isometry on $\mathcal H_S$ and absorb its ancilla
and output space into $\mathcal H_S$, so that it becomes a unitary $\hat U_S$
with a distinguished output factor $\mathcal K_S$ of dimension $d$. By
correctness the reduced state on $\mathcal K_S$ after applying $\hat U_S$ is
$\ket\psi\bra\psi$, which is pure. A subsystem of a pure global state is pure
only if it is unentangled with its complement, so the global state factorizes as
claimed.
\end{proof}
 
We call $\mathcal K_S$ the \emph{output register} of $S$ and $\mathcal R_S$ its
\emph{residue}. Note $\hat U_S$ acts only on registers indexed by $S$; in
particular, if $S$ and $S'$ are disjoint then $\hat U_S$ and $\hat U_{S'}$ act
on disjoint registers and therefore commute.

\subsection{Labeled Cliques and the Residual State}
 
Recall from Section~\ref{sec:positive} that the vertices of the access structure graph
$G_\Gamma$ are the minimal authorized sets, adjacent when disjoint. A set of $k$
pairwise disjoint authorized sets is therefore a $k$-clique of $G_\Gamma$. For
our purposes the clique must be \emph{ordered}, because the argument turns on
which set reads which input copy.
 
\begin{definition}[Labeled Clique]\label{def:neg-clique}
A \emph{labeled $k$-clique} is an ordered tuple $\vec A=(A_1,\dots,A_k)$ of
pairwise disjoint authorized sets. We call $i$ the \emph{index} of $A_i$ in
$\vec A$, and think of the index as a color: $\vec A$ colors $A_i$ with $i$.
\end{definition}
 
A labeled $k$-clique exists if and only if $G_\Gamma$ contains a $k$-clique,
and everything in this section is vacuous otherwise. Given one, we may apply all
$k$ recovery unitaries at once, since they act on disjoint registers. The
following lemma is the engine for the entire section: once every copy has been
reconstructed, what is left over carries no information about the secret.

\begin{lemma}[Constant residue]\label{lem:neg-residue}
Let $\vec A$ be a labeled $k$-clique and set
$\hat U_{\vec A}:=\hat U_{A_1}\hat U_{A_2}\cdots\hat U_{A_k}$. Then there is a
single fixed unit vector $\ket{\mathrm{Rem}}$, independent of the secret, with
\[
    \hat U_{\vec A}\,\ket{\Psi_\psi}
      \;=\; \ket{\psi}_{\mathcal K_{A_1}}\otimes\cdots\otimes
             \ket{\psi}_{\mathcal K_{A_k}}\otimes\ket{\mathrm{Rem}}
\]
for every pure $\ket\psi$, where $\ket{\mathrm{Rem}}$ lives on
$\mathcal R_{A_1}\otimes\cdots\otimes\mathcal R_{A_k}$ together with the
registers of the users outside $A_1\cup\cdots\cup A_k$ and the register
$\mathcal H_E$.
\end{lemma}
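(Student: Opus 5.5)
We prove the statement in two stages. First we show that, for each fixed $\ket\psi$, the state after applying all $k$ recovery unitaries factorizes. Then we show that the leftover factor does not depend on $\ket\psi$.

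\textbf{Step 1 (factorization for a fixed secret).} Since $A_1,\dots,A_k$ are pairwise disjoint, the unitaries $\hat U_{A_i}$ act on disjoint registers and commute. Apply them one at a time. By Lemma~\ref{lem:neg-unitary}, after applying $\hat U_{A_1}$ the global state is $\ket\psi_{\mathcal K_{A_1}}\otimes\ket{\Theta^{A_1}_\psi}$. The remaining unitaries act only on registers inside $A_2\cup\cdots\cup A_k$, so they never touch $\mathcal K_{A_1}$, and this product form persists.

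More cleanly, the final state $\hat U_{\vec A}\ket{\Psi_\psi}$ is pure. For each $i$, it equals $\big(\prod_{j\neq i}\hat U_{A_j}\big)$ applied to $\ket\psi_{\mathcal K_{A_i}}\otimes\ket{\Theta^{A_i}_\psi}$. Here the product of the other unitaries acts on registers disjoint from $\mathcal K_{A_i}$, so the reduced state on $\mathcal K_{A_i}$ is exactly $\ket\psi\bra\psi$. A pure global state whose marginals on the disjoint factors $\mathcal K_{A_1},\dots,\mathcal K_{A_k}$ are all pure must be a product across these factors and the rest. We apply the same ``pure marginal implies unentangled'' fact used in Lemma~\ref{lem:neg-unitary}, iteratively, one output register at a time. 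This yields
\[
\hat U_{\vec A}\ket{\Psi_\psi}=\ket\psi^{\otimes k}\otimes\ket{\mathrm{Rem}_\psi}
\]
for some unit vector $\ket{\mathrm{Rem}_\psi}$, determined up to a phase.

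\textbf{Step 2 (independence of the secret).} This is the main step. The map $\ket\psi^{\otimes k}\mapsto\hat U_{\vec A}\hat U_D\ket\psi^{\otimes k}$ is the restriction of a linear isometry $V:=\hat U_{\vec A}\hat U_D$ on $\mathcal H^{\otimes k}$. The vectors $\ket\psi^{\otimes k}$ span the symmetric subspace.

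Fix two secrets $\ket{\psi_0},\ket{\psi_1}$ and consider $\ket{\phi}=\alpha\ket{\psi_0}+\beta\ket{\psi_1}$, normalized. Expanding $\ket\phi^{\otimes k}$ gives a polynomial in $\alpha,\beta$ whose coefficients are symmetrized mixed tensors. By linearity,
\[
V\ket\phi^{\otimes k}=\sum_{j}\alpha^{k-j}\beta^{j}\,V\ket{s_j},
\]
where $\ket{s_j}$ is the sum of all tensors with $j$ copies of $\psi_1$. By Step 1, the left side also equals $\ket\phi^{\otimes k}\otimes\ket{\mathrm{Rem}_\phi}$.

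We first handle the degenerate case $\ket{\psi_1}\propto\ket{\psi_0}$; there the claim is just phase fixing. So assume the two secrets are linearly independent. Then the tensors $\ket{\psi_0}^{\otimes k}$, $\ket{\psi_1}^{\otimes k}$ and the mixed tensors are linearly independent in $\mathcal H^{\otimes k}$. Comparing the terms with $\alpha^k$ and $\beta^k$ as polynomials in $(\alpha,\beta)$ is somewhat delicate, because $\ket{\mathrm{Rem}_\phi}$ may itself depend on $\alpha,\beta$.

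We avoid this with an inner-product argument. Take the component of both sides along $\bra{\psi_0^\perp}$-type functionals: project the output registers onto a vector $\ket{\chi}^{\otimes k}$ with $\braket{\chi}{\psi_1}=0$ and $\braket{\chi}{\psi_0}\neq0$. On the right side of the linear expansion, only the $j=0$ term survives, giving $\alpha^k\braket{\chi}{\psi_0}^k\ket{\mathrm{Rem}_{\psi_0}}$. On the product side, projecting gives $(\alpha\braket{\chi}{\psi_0})^k\ket{\mathrm{Rem}_\phi}$. Hence $\ket{\mathrm{Rem}_\phi}=\ket{\mathrm{Rem}_{\psi_0}}$ whenever $\alpha\neq0$.

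By the symmetric argument, $\ket{\mathrm{Rem}_\phi}=\ket{\mathrm{Rem}_{\psi_1}}$ whenever $\beta\neq0$. Choosing $\alpha,\beta$ both nonzero gives $\ket{\mathrm{Rem}_{\psi_0}}=\ket{\mathrm{Rem}_{\psi_1}}$.

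Since this holds for all pairs of secrets, $\ket{\mathrm{Rem}}$ is a single fixed vector. The residual phase ambiguity is absorbed into the convention $\ket{\psi}^{\otimes k}$, because the global phase in $\hat U_D\ket\psi^{\otimes k}$ is determined by the input vector itself. The main care needed is in this phase bookkeeping and in making the factorization in Step 1 rigorous when residues live on overlapping outside registers.
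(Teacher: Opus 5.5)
Your Step 1 is fine and matches the paper: apply the commuting $\hat U_{A_i}$ in turn and use ``pure marginal implies product.'' The gap is in Step 2, at the sentence ``On the right side of the linear expansion, only the $j=0$ term survives.''

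\textbf{Why the step is circular.} For that sentence you need $(\bra{\chi}^{\otimes k}\otimes I)V\ket{s_j}=0$ for every $j\ge 1$. That is, the output registers of $V\ket{s_j}$ must carry at least one factor of $\psi_1$. That would hold if $V\ket{s_j}$ had the form $\ket{s_j}\otimes(\cdot)$, but nothing gives you this. You only know $V$ on product inputs $\ket\psi^{\otimes k}$, where the residue may a priori depend on $\psi$. Linear independence of the input tensors says nothing about the output-register content of $V\ket{s_j}$.

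Concretely, polarization gives $\ket{s_j}=\frac{1}{k+1}\sum_{m=0}^{k}\omega^{-mj}(\ket{\psi_0}+\omega^m\ket{\psi_1})^{\otimes k}$ with $\omega=e^{2\pi i/(k+1)}$. Hence $(\bra{\chi}^{\otimes k}\otimes I)V\ket{s_j}=\frac{\langle\chi|\psi_0\rangle^k}{k+1}\sum_{m=0}^{k}\omega^{-mj}\ket{\mathrm{Rem}_{\phi_m}}$, where $\phi_m$ is the normalized $\psi_0+\omega^m\psi_1$. This vanishes for all $j\ge1$ exactly when the residues $\ket{\mathrm{Rem}_{\phi_m}}$ coincide. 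So the step assumes the conclusion. The $(\alpha,\beta)$-dependence you rightly worried about on the product side has just moved into the unknown vectors $V\ket{s_j}$. Note also that you state $V$ is an isometry but never use it.

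\textbf{How to close it.} The paper uses the isometry directly. For non-orthogonal secrets, $\langle\psi^0|\psi^1\rangle^k=\langle\psi^0|\psi^1\rangle^k\langle\mathrm{Rem}_{\psi^0}|\mathrm{Rem}_{\psi^1}\rangle$, so the two unit residues have inner product $1$ and are equal. Orthogonal pairs are then linked through an intermediate non-orthogonal state.

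If you want to keep your projection idea, you must use the two projections jointly rather than separately. Extend to all $(\alpha,\beta)\in\mathbb C^2$ by homogeneity. Let $f=\sum_j\alpha^{k-j}\beta^j v_j$ be the $\chi$-projection and $f'=\sum_j\alpha^{k-j}\beta^j v'_j$ the $\chi'$-projection, where $\langle\chi'|\psi_0\rangle=0$, $c':=\langle\chi'|\psi_1\rangle\neq0$ and $c:=\langle\chi|\psi_0\rangle$. These equal $(\alpha c)^k\ket{\mathrm{Rem}_\phi}$ and $(\beta c')^k\ket{\mathrm{Rem}_\phi}$ with the \emph{same} residue. Therefore $(\beta c')^k f=(\alpha c)^k f'$ as a polynomial identity. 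The monomials $\alpha^{k-j}\beta^{k+j}$ with $j\ge1$ occur only on the left, which forces $v_j=0$. That is precisely the claim you asserted, and this route needs only linearity of $V$.

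Minor: $\ket{\mathrm{Rem}_\psi}$ is uniquely determined in Step 1, since the first factor $\ket\psi^{\otimes k}$ is a fixed nonzero vector. It is also invariant under $\ket\psi\mapsto e^{i\theta}\ket\psi$ by linearity, so the closing phase discussion is unnecessary.
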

\begin{proof}
    First, notice that the $\hat U_{A_i}$ commute and each acts on registers untouched by the others. Hence, we can apply Lemma~\ref{lem:neg-unitary} to each $A_i$ in turn to obtain the stated factorization, except with $\ket{\mathrm{Rem}}$ replaced by some $\ket{\mathrm{Rem}_\psi}$. It remains to show this vector does not depend on
    $\psi$.

    The composite $V:=\hat U_{\vec A}\,\hat U_D$ is an isometry, so it preserves
inner products. Let $\ket{\psi^0},\ket{\psi^1}$ be non-orthogonal. Comparing the
inner product of the inputs $\ket{\psi^0}^{\otimes k},\ket{\psi^1}^{\otimes k}$
with that of the outputs gives
\[
    \braket{\psi^0|\psi^1}^{\,k}
      \;=\; \braket{\psi^0|\psi^1}^{\,k}\,
            \braket{\mathrm{Rem}_{\psi^0}|\mathrm{Rem}_{\psi^1}} .
\]
Since $\braket{\psi^0|\psi^1}\neq 0$ we may divide, obtaining
$\braket{\mathrm{Rem}_{\psi^0}|\mathrm{Rem}_{\psi^1}}=1$. Two unit vectors with
inner product $1$ are equal. Any two states of $\mathcal H$, orthogonal or not,
are connected by a chain of pairwise non-orthogonal states, so
$\ket{\mathrm{Rem}_\psi}$ is the same vector for all $\psi$.
\end{proof}

Because $\ket{\mathrm{Rem}}$ is a fixed vector, we may run the whole dealing process backwards on arbitrary, not necessarily identical, inputs. This allows us to define an alternate dealer, as follows.

\begin{definition}[Alternate dealer]\label{def:neg-altdealer}
Let $\vec A$ be a labeled $k$-clique. The \emph{alternate dealer}
$\mathcal D'_{\vec A}$ is the linear map
\[
    \mathcal D'_{\vec A}\;:\;\mathcal H^{\otimes k}\longrightarrow
      \mathcal H_1\otimes\cdots\otimes\mathcal H_n\otimes\mathcal H_E,
    \quad
    \mathcal D'_{\vec A}(\ket{\phi_1}\otimes\cdots\otimes\ket{\phi_k})
      := \hat U_{\vec A}^{\dagger}
         \Big(\ket{\phi_1}\otimes\cdots\otimes\ket{\phi_k}\otimes\ket{\mathrm{Rem}}\Big),
\]
where $\ket{\phi_i}$ is placed in the output register $\mathcal K_{A_i}$.
\end{definition}

Two key properties follow immediately from Lemma~\ref{lem:neg-residue} and the
definition. 
 
\begin{lemma}\label{lem:neg-altprops}
For every labeled $k$-clique $\vec A$:
\begin{enumerate}
    \item \textbf{(Agreement on the diagonal.)}
      $\mathcal D'_{\vec A}\big(\ket\psi^{\otimes k}\big)=\ket{\Psi_\psi}$ for
      every pure $\ket\psi$. In particular, all labeled cliques agree on
      identical inputs.
    \item \textbf{(Index extraction.)} For every $i$ and \emph{all} inputs
      $\ket{\phi_1},\dots,\ket{\phi_k}$,
      \[
        \hat U_{A_i}\,\mathcal D'_{\vec A}(\ket{\phi_1}\otimes\cdots\otimes\ket{\phi_k})
          \;=\; \ket{\phi_i}\otimes(\cdots),
      \]
      with the omitted factor lying on $\mathcal R_{A_i}$ and the registers
      outside $A_i$. That is, index $i$ is tied to the set $A_i$, and
      $\hat U_{A_i}$ recovers whatever was fed into it. This holds even when the
      inputs are not identical.
\end{enumerate}
\end{lemma}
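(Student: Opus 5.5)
Both properties come from unwinding Definition~\ref{def:neg-altdealer} against Lemma~\ref{lem:neg-residue}, using that $\hat U_{\vec A}$ is unitary and that the factors $\hat U_{A_j}$ act on disjoint registers and hence commute.

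\textbf{Agreement on the diagonal.} Take $\ket{\phi_1}=\cdots=\ket{\phi_k}=\ket\psi$. Lemma~\ref{lem:neg-residue} gives
\[
\hat U_{\vec A}\ket{\Psi_\psi}=\ket\psi_{\mathcal K_{A_1}}\otimes\cdots\otimes\ket\psi_{\mathcal K_{A_k}}\otimes\ket{\mathrm{Rem}}.
\]
Apply $\hat U_{\vec A}^\dagger$ to both sides. Since $\hat U_{\vec A}$ is a product of unitaries, $\hat U_{\vec A}^\dagger\hat U_{\vec A}=I$. The left side becomes $\ket{\Psi_\psi}$ and the right side is by definition $\mathcal D'_{\vec A}(\ket\psi^{\otimes k})$. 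The state $\ket{\Psi_\psi}=\hat U_D\ket\psi^{\otimes k}$ does not depend on $\vec A$, so any two labeled cliques agree on every input $\ket\psi^{\otimes k}$. I would remark, without needing it here, that such inputs span the symmetric subspace, so the agreement extends to that subspace by linearity.

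\textbf{Index extraction.} Factor $\hat U_{\vec A}=\hat U_{A_i}\hat W$, where $\hat W:=\prod_{j\neq i}\hat U_{A_j}$; commutativity allows any ordering. Then $\hat U_{\vec A}^\dagger=\hat W^\dagger\hat U_{A_i}^\dagger$, and
\[
\hat U_{A_i}\mathcal D'_{\vec A}(\otimes_j\ket{\phi_j})=\hat U_{A_i}\hat W^\dagger\hat U_{A_i}^\dagger\bigl(\otimes_j\ket{\phi_j}\otimes\ket{\mathrm{Rem}}\bigr)=\hat W^\dagger\bigl(\otimes_j\ket{\phi_j}\otimes\ket{\mathrm{Rem}}\bigr),
\]
using $[\hat U_{A_i},\hat W^\dagger]=0$.

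The operator $\hat W^\dagger$ acts only on the registers of $\bigcup_{j\ne i}A_j$, which are disjoint from $A_i$. Here I would note that $\mathcal K_{A_i}$ is a tensor factor of $\mathcal H_{A_i}$, including its absorbed ancilla. So $\hat W^\dagger$ acts as the identity on $\mathcal K_{A_i}$, and the result is $\ket{\phi_i}_{\mathcal K_{A_i}}\otimes\hat W^\dagger\bigl(\otimes_{j\neq i}\ket{\phi_j}\otimes\ket{\mathrm{Rem}}\bigr)$. This has the claimed form, with the omitted factor living on $\mathcal R_{A_i}$ and the registers outside $A_i$.

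I expect no real obstacle; the argument is bookkeeping. The only point needing care is the register accounting in the previous step. Each $\hat U_{A_j}$ must be unitary on its own enlarged space $\mathcal H_{A_j}$, and these enlarged spaces must be disjoint tensor factors, so that inverting the other reconstructions cannot disturb $\mathcal K_{A_i}$. Lemma~\ref{lem:neg-unitary} and its following remark provide exactly this.
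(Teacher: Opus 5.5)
Your proof is correct and is exactly the unwinding the paper has in mind. The paper just asserts that both properties follow immediately from Lemma~\ref{lem:neg-residue} and Definition~\ref{def:neg-altdealer}; you supply the details by inverting $\hat U_{\vec A}$ on the diagonal and cancelling $\hat U_{A_i}$ against $\hat U_{A_i}^\dagger$, using that it commutes with the other reconstruction unitaries, none of which touch $\mathcal K_{A_i}$.
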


Property (1) says every labeled clique reproduces the honest dealer on the
diagonal. Property (2) says the labeled cliques may nonetheless be told apart
off the diagonal, by asking which set reads which input.

\subsection{The Substitution Lemma}
\begin{definition}
Let $\vec A$ be a labeled $k$-clique, let $i_0\in[k]$, and let $A'$ be an
authorized set disjoint from $A_j$ for every $j\neq i_0$. The labeled clique
$\vec B$ obtained from $\vec A$ by replacing $A_{i_0}$ with $A'$, leaving all
other entries fixed, is said to be obtained from $\vec A$ by a
\emph{substitution at index $i_0$}.
\end{definition}

\begin{lemma}[Substitution]\label{lem:neg-substitution}
If $\vec B$ is obtained from $\vec A$ by a substitution, then
$\mathcal D'_{\vec A}=\mathcal D'_{\vec B}$ as linear maps. That is, they agree
on all inputs, not only on the diagonal.
\end{lemma}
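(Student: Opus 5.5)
The plan is to isolate the single coordinate that changes. Write $C:=A_{i_0}$ and $C':=A'$, and let $W:=\prod_{j\neq i_0}\hat U_{A_j}$. This operator is common to $\vec A$ and $\vec B$. All the $\hat U$'s involved act on pairwise disjoint registers, so they commute. Hence $\hat U_{\vec A}=\hat U_C W$ and $\hat U_{\vec B}=\hat U_{C'}W$. The idea is to peel off $W$ from both alternate dealers and reduce the claim to a statement about a map with a single input, where agreement on the diagonal is already enough.

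\textbf{Step 1 (factor out the common part).} Let $\mathcal M$ be the space consisting of $\bigotimes_{j\ne i_0}\mathcal R_{A_j}$ together with every register outside $\bigcup_{j\ne i_0}A_j$, including $\mathcal H_E$. Define two linear maps $X_{\vec A},X_{\vec B}:\mathcal H\to\mathcal M$ by
\[
X_{\vec A}(\ket\phi):=\hat U_C^{\dagger}\big(\ket{\phi}_{\mathcal K_C}\otimes\ket{\mathrm{Rem}_{\vec A}}\big),\qquad
X_{\vec B}(\ket\phi):=\hat U_{C'}^{\dagger}\big(\ket{\phi}_{\mathcal K_{C'}}\otimes\ket{\mathrm{Rem}_{\vec B}}\big).
\]
Here $\mathrm{Rem}_{\vec A}$ and $\mathrm{Rem}_{\vec B}$ are the fixed vectors from Lemma~\ref{lem:neg-residue}. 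They live in different spaces, but after applying $\hat U_C^\dagger$ and $\hat U_{C'}^\dagger$ respectively both images lie in $\mathcal M$, since $C$ and $C'$ are both disjoint from every $A_j$ with $j\ne i_0$.

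Since $\hat U_C^\dagger$ does not touch the registers $\mathcal K_{A_j}$ for $j\neq i_0$, Definition~\ref{def:neg-altdealer} together with $\hat U_{\vec A}^\dagger=W^\dagger\hat U_C^\dagger$ gives
\[
W\,\mathcal D'_{\vec A}(\ket{\phi_1}\otimes\cdots\otimes\ket{\phi_k})=\Big(\textstyle\bigotimes_{j\ne i_0}\ket{\phi_j}_{\mathcal K_{A_j}}\Big)\otimes X_{\vec A}(\ket{\phi_{i_0}}).
\]
The identical formula holds for $\vec B$ with $X_{\vec B}$, and the tensor split is the same in both cases because $W$ is the same operator.

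\textbf{Step 2 (use the diagonal).} By Lemma~\ref{lem:neg-altprops}(1), both alternate dealers send $\ket\psi^{\otimes k}$ to $\ket{\Psi_\psi}$. Applying $W$ gives
\[
\ket\psi^{\otimes(k-1)}\otimes X_{\vec A}(\ket\psi)=\ket\psi^{\otimes(k-1)}\otimes X_{\vec B}(\ket\psi).
\]
Taking the inner product with $\ket\psi^{\otimes(k-1)}$ on the first factors yields $X_{\vec A}(\ket\psi)=X_{\vec B}(\ket\psi)$ for every unit vector $\ket\psi$. By linearity, $X_{\vec A}=X_{\vec B}$ on all of $\mathcal H$.

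\textbf{Step 3 (conclude).} By Step 1, $W\mathcal D'_{\vec A}$ and $W\mathcal D'_{\vec B}$ therefore agree on every product input. Product inputs span $\mathcal H^{\otimes k}$, and $W$ is unitary, so $\mathcal D'_{\vec A}=\mathcal D'_{\vec B}$ as linear maps.

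The main point needing care is Step 1. One must check that the $k-1$ unchanged recovery unitaries can be applied first and pulled through in exactly the same way for both cliques. This is where disjointness of $A'$ from every $A_j$ with $j\neq i_0$ is used. Once this is verified, the dependence on the substituted coordinate collapses to a single-input map. For such a map, agreement on all pure states is already agreement as linear maps, so the cross-term argument sketched in the overview is not needed.
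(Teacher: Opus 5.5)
Your proof is correct, and its first step is the same as the paper's: both apply $W=\prod_{j\neq i_0}\hat U_{A_j}$ to cancel the common adjoints. This leaves only $\hat U_{A_{i_0}}^\dagger$ (resp.\ $\hat U_{A'}^\dagger$), acting on registers disjoint from the unchanged sets. The two routes differ in how they pass from diagonal agreement to agreement everywhere.

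The paper sets $D:=\mathcal D'_{\vec A}-\mathcal D'_{\vec B}$ and records only that $W D(\ket{e_{w_1}}\otimes\cdots\otimes\ket{e_{w_k}})=\big(\bigotimes_{j\neq i_0}\ket{e_{w_j}}\big)\otimes\ket{\delta_w}$. It treats $\ket{\delta_w}$ as an arbitrary vector that may depend on the whole word $w$. It then kills every $\ket{\delta_w}$ by expanding $D(\ket{v_{\vec z}}^{\otimes k})=0$ into monomials: within a type class the restricted word determines $w$, so the terms are orthogonal and vanish individually.

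You instead make explicit that the leftover is $X_{\vec A}(\ket{\phi_{i_0}})$, a linear function of the substituted input alone. This is where the constancy of $\ket{\mathrm{Rem}}$ is used, together with the fact that $\hat U_{A_{i_0}}^\dagger$ does not touch the registers $\mathcal K_{A_j}$ for $j\neq i_0$. The problem then reduces to comparing two single-input linear maps, and for those, agreement on pure states is agreement everywhere. In the paper's notation $\ket{\delta_w}=X_{\vec A}(\ket{e_{w_{i_0}}})-X_{\vec B}(\ket{e_{w_{i_0}}})$. So only the constant-word types $(a,\dots,a)$ are really needed, and your argument shows that agreement on the $d$ inputs $\ket{e_a}^{\otimes k}$ already suffices.

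Your route is shorter and avoids the polynomial bookkeeping. The paper's route uses less structure: only the prefix form, not the shape of the leftover. It would therefore still work if the residue after reading out the $k-1$ unchanged copies could depend on all the inputs. Here, though, your factorization follows directly from the definition of the alternate dealer, so you lose nothing by using it.
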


\begin{proof}
    Write $D:=\mathcal D'_{\vec A}-\mathcal D'_{\vec B}$ and let $i_0$ be the index at which the cliques differ. By Lemma~\ref{lem:neg-altprops}(1), $D(\ket\psi^{\otimes k})=0$ for every unit vector $\ket\psi$, hence by homogeneity $D(\ket v^{\otimes k})=0$ for every $\ket v\in\mathcal H$.

Fix an orthonormal basis $\ket{e_1},\dots, \ket{e_d}$ of $\mathcal{H}$ and let
$\ket{v_{\vec z}}=\sum_{a=1}^{d}z_a\ket{e_a}$ for $\vec z\in\mathbb C^d$. Expanding
$D\big(\ket{v_{\vec z}}^{\otimes k}\big)=0$ multilinearly gives the polynomial
identity
\[
    \sum_{w\in[d]^k} z_{w_1}z_{w_2}\cdots z_{w_k}\;
      D\big(\ket{e_{w_1}}\otimes\cdots\otimes \ket{e_{w_k}}\big)\;=\;0
    \qquad\text{for all }\vec z\in\mathbb C^d .
\]

Notice that the monomial attached to a word $w \in [d]^k$ depends only on its
\emph{type} $\mathrm{type}(w)\in\mathbb N^d$, the vector recording how many
times each letter occurs in $w$. Equating the coefficient of each monomial to zero therefore gives 
\begin{equation}\label{eq:neg-type-sum}
    \sum_{w\,:\,\mathrm{type}(w)=\vec m}
      D\big(\ket{e_{w_1}}\otimes\cdots\otimes \ket{e_{w_k}}\big)\;=\;0 
\end{equation}
for every type $\vec m$.

Write $C_j:=A_j=B_j$ for $j\neq i_0$ for the $k-1$ sets shared by the two
cliques, and set $W:=\bigotimes_{j\neq i_0}\hat U_{C_j}$. The $\hat U_{C_j}$
act on disjoint registers, so they commute, and applying $W$ cancels the
corresponding adjoints in Definition~\ref{def:neg-altdealer} for \emph{both}
$\mathcal D'_{\vec A}$ and $\mathcal D'_{\vec B}$, the remaining
factor acting only on registers indexed by $A_{i_0}$ or $B_{i_0}$, which are
disjoint from every $C_j$. Hence by
Lemma~\ref{lem:neg-altprops}(2), for any word $w \in [d]^k$,
\[
    W\,D\big(\ket{e_{w_1}}\otimes\cdots\otimes \ket{e_{w_k}}\big)
    \;=\;\Big(\bigotimes_{j\neq i_0}\ket{e_{w_j}}\Big)\otimes\ket{\delta_w},
\]
where $\ket{\delta_w}$ is a vector on the remaining registers.

Apply $W$ to \eqref{eq:neg-type-sum}. Within a fixed type $\vec m$,
the restricted word $w|_{[k]\setminus\{i_0\}}$ determines the missing letter
$w_{i_0}$, hence determines $w$. This means that distinct terms of the sum carry distinct restricted words, and the vectors $\bigotimes_{j\neq i_0}\ket{e_{w_j}}$ appearing in them are linearly independent, since the $\ket{e_a}$ are orthonormal. Every term must therefore vanish separately: $\ket{\delta_w}=0$, and hence $D(\ket{e_{w_1}}\otimes\cdots\otimes \ket{e_{w_k}})=0$ for every word $w\in[d]^k$. These product vectors span $\mathcal H^{\otimes k}$, so $D=0$, as required. 
\end{proof}

\subsection{The Clique Path Obstruction}
\begin{definition}[Clique Path]\label{def:neg-cliquepath}
A \emph{Clique Path} is a sequence
$P=\big(\vec A^{(0)},\vec A^{(1)},\dots,\vec A^{(\ell)}\big)$ of labeled
$k$-cliques in which each $\vec A^{(m+1)}$ is obtained from $\vec A^{(m)}$ by a
substitution.
\end{definition}

\begin{corollary}\label{cor:neg-path-equal}
All labeled cliques along a Clique Path have the same alternate dealer:
$\mathcal D'_{\vec A^{(0)}}=\cdots=\mathcal D'_{\vec A^{(\ell)}}$.
\end{corollary}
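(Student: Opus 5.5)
The plan is a straightforward induction along the path, using Lemma~\ref{lem:neg-substitution} as the only nontrivial ingredient. By Definition~\ref{def:neg-cliquepath}, for each $m\in\{0,\dots,\ell-1\}$ the labeled clique $\vec A^{(m+1)}$ is obtained from $\vec A^{(m)}$ by a substitution at some index. Lemma~\ref{lem:neg-substitution} then gives $\mathcal D'_{\vec A^{(m)}}=\mathcal D'_{\vec A^{(m+1)}}$ as linear maps on all of $\mathcal H^{\otimes k}$. This is a genuine equality of maps, not merely agreement on the diagonal.

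The induction itself is short. Assume $\mathcal D'_{\vec A^{(0)}}=\mathcal D'_{\vec A^{(m)}}$, which holds trivially for $m=0$. Composing it with the substitution step gives $\mathcal D'_{\vec A^{(0)}}=\mathcal D'_{\vec A^{(m+1)}}$. The claim then follows for every $m\le \ell$ by transitivity of equality.

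The only point to check is well-definedness. Each $\vec A^{(m)}$ must be a valid labeled $k$-clique, so that $\hat U_{\vec A^{(m)}}$, $\ket{\mathrm{Rem}}$ and the alternate dealer of Definition~\ref{def:neg-altdealer} all make sense. This is built into Definition~\ref{def:neg-cliquepath}: each entry is a labeled $k$-clique, and a substitution requires the new set to be disjoint from the remaining $k-1$ sets. Note that $\ket{\mathrm{Rem}}$ may depend on the clique. This causes no trouble, because Lemma~\ref{lem:neg-substitution} already compares the full maps, including their respective residues. There is no real obstacle here; all the analytic work was done in the Substitution Lemma.
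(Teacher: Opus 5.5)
Your proof is correct and takes the same approach as the paper, which is simply induction on $\ell$ using Lemma~\ref{lem:neg-substitution}. Your remarks on well-definedness, and on $\ket{\mathrm{Rem}}$ possibly depending on the clique, are accurate; as you note, the Substitution Lemma already accounts for both.
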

 
\begin{proof}
Induction on $\ell$ using Lemma~\ref{lem:neg-substitution}.
\end{proof}

A Clique Path colors authorized sets: the set at index $i$ of some
clique along the path receives color $i$. The obstruction is that this coloring
may fail to be well defined.
 
\begin{definition}[Inconsistent Clique Path]\label{def:neg-inconsistent}
A Clique Path $P$ \emph{colors} an authorized set $S$ with $i$ if
$A^{(m)}_i=S$ for some $m$. We say $P$ is \emph{inconsistent} if it colors some
authorized set with two distinct indices, i.e.\ if there are $m,m'$ and
$i\neq i'$ with $A^{(m)}_i=A^{(m')}_{i'}$.
\end{definition}

\begin{theorem}[Clique-Path obstruction]\label{thm:neg-criterion}
Let $\Gamma$ be a monotone access structure and $k\ge2$. If there is an
inconsistent Clique Path for $\Gamma$, then no $((n,\Gamma,k))$-QSS scheme exists.
\end{theorem}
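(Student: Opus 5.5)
The argument is by contradiction. Suppose a $((n,\Gamma,k))$-QSS scheme exists, and put it in the canonical form of Section~\ref{sec:canonical}: a purified dealer $\hat U_D$ (Lemma~\ref{lem:neg-purified}) and unitary reconstructions $\hat U_S$ (Lemma~\ref{lem:neg-unitary}). Fix an inconsistent Clique Path $P=(\vec A^{(0)},\dots,\vec A^{(\ell)})$. By Definition~\ref{def:neg-inconsistent}, there are indices $m,m'$ and $i\neq i'$ with $A^{(m)}_i=A^{(m')}_{i'}=:S$.

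First, Corollary~\ref{cor:neg-path-equal} gives $\mathcal D'_{\vec A^{(m)}}=\mathcal D'_{\vec A^{(m')}}=:\mathcal D'$ as linear maps. Note that the alternate dealers were built from the reconstruction unitaries and $\ket{\mathrm{Rem}}$ only. Security is never used, so the argument also gives the erasure-code statement.

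Next, I would feed in a non-diagonal input to exploit the two index-extraction statements in Lemma~\ref{lem:neg-altprops}(2). Since $d\ge 2$ (for $d=1$ the secret space is trivial and nothing is shared), choose orthonormal $\ket{e_1},\ket{e_2}$. Set $\ket{\phi_i}=\ket{e_1}$ and $\ket{\phi_{i'}}=\ket{e_2}$, with all other $\ket{\phi_j}=\ket{e_1}$. Let $\ket{X}=\mathcal D'(\ket{\phi_1}\otimes\cdots\otimes\ket{\phi_k})$. Reading $\mathcal D'$ as $\mathcal D'_{\vec A^{(m)}}$, where $S$ has index $i$, gives $\hat U_S\ket X=\ket{e_1}_{\mathcal K_S}\otimes\ket{\alpha}$. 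Reading it as $\mathcal D'_{\vec A^{(m')}}$, where $S$ has index $i'$, gives $\hat U_S\ket X=\ket{e_2}_{\mathcal K_S}\otimes\ket{\beta}$. Both expressions use the same unitary $\hat U_S$ and the same output factor $\mathcal K_S$, because these are fixed by $S$ alone through Lemma~\ref{lem:neg-unitary}.

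Finally, $\ket X\neq 0$: $\mathcal D'_{\vec A}$ is $\hat U_{\vec A}^\dagger$ applied to a unit vector, and $\hat U_{\vec A}$ is unitary, so $\ket X$ is a unit vector. Hence $\hat U_S\ket X$ is a unit vector. It cannot equal both $\ket{e_1}\otimes\ket\alpha$ and $\ket{e_2}\otimes\ket\beta$, because projecting onto $\ket{e_2}\!\bra{e_2}$ on $\mathcal K_S$ gives $0$ from the first form and $\ket\beta$ from the second. The second form forces $\|\ket\beta\|=1$, so $\ket\beta\neq 0$, which is the contradiction.

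The main point needing care is the bookkeeping in the middle step. The "omitted factor" in Lemma~\ref{lem:neg-altprops}(2) must have norm one (it does, by unitarity), and the register $\mathcal K_S$ must be the same in both cliques. Each $A^{(m)}_i$ is a minimal authorized set, and $\hat U_S$ depends only on $S$, so this holds. The substantive work, namely that the alternate dealers agree off the diagonal, is already done by Lemma~\ref{lem:neg-substitution}.
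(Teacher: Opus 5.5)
Your proposal is correct and is essentially the paper's proof: you equate $\mathcal D'_{\vec A^{(m)}}$ and $\mathcal D'_{\vec A^{(m')}}$ via Corollary~\ref{cor:neg-path-equal}, feed in an off-diagonal input with $\ket{\phi_i},\ket{\phi_{i'}}$ independent, and apply Lemma~\ref{lem:neg-altprops}(2) twice, so the unit vector $\hat U_S\ket{X}$ would need its $\mathcal K_S$ factor to be both $\ket{e_1}$ and $\ket{e_2}$. Your explicit orthonormal choice and the $d\ge 2$ remark are fine; the one quibble is that Definition~\ref{def:neg-clique} does not guarantee the $A^{(m)}_i$ are minimal, and minimality is not needed, since all that matters is that one $\hat U_S$ (with its $\mathcal K_S$) is fixed for each authorized $S$.
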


\begin{proof}
Suppose a scheme exists, and put it in the canonical form of Section~\ref{sec:canonical}. Let
$P$ be an inconsistent Clique Path, witnessed by an authorized set $S$ with
$A^{(m)}_i=S=A^{(m')}_{i'}$ for $i\neq i'$. Write $\vec B:=\vec A^{(m)}$ and
$\vec B':=\vec A^{(m')}$. By Corollary~\ref{cor:neg-path-equal},
$\mathcal D'_{\vec B}=\mathcal D'_{\vec B'}$ as linear maps.
 
Choose inputs $\ket{\phi_1},\dots,\ket{\phi_k}$ with $\ket{\phi_i}$ and $\ket{\phi_{i'}}$ linearly
independent, and let
\[
    \ket{\Psi^\ast}\;:=\;\mathcal D'_{\vec B}(\ket{\phi_1}\otimes\cdots\otimes\ket{\phi_k})
      \;=\;\mathcal D'_{\vec B'}(\ket{\phi_1}\otimes\cdots\otimes\ket{\phi_k}),
\]
a single state. Now let $S$ reconstruct, by applying $\hat U_S$. Reading through
$\vec B$, the set $S$ sits at index $i$, so
Lemma~\ref{lem:neg-altprops}(2) gives
$\hat U_S\ket{\Psi^\ast}=\ket{\phi_i}\otimes\ket{X}$. Reading through $\vec B'$, the
set $S$ sits at index $i'$, so the same lemma gives
$\hat U_S\ket{\Psi^\ast}=\ket{\phi_{i'}}\otimes\ket{Y}$. Hence
\[
    \ket{\phi_i}\otimes\ket{X}\;=\;\ket{\phi_{i'}}\otimes\ket{Y}
\]
in $\mathcal K_S\otimes\mathcal R_S\otimes\mathcal H_{\overline S}\otimes\mathcal H_E$, where $\mathcal H_{\overline S}$ denotes the registers outside $S$. The state $\hat{U}_S\ket{\Psi^\ast}$ is thus a product vector whose first factor is proportional to $\ket{\phi_i}$ and also to $\ket{\phi_{i'}}$. Since these are linearly independent, it must be the zero vector. This is a contradiction, since $\hat{U}_S$ is a unitary and $\ket{\Psi^\ast}$ is a unit vector, hence $ \|\hat{U}_S\ket{\Psi^\ast}\| = 1 \neq 0$. 
\end{proof}

\begin{remark}[Only the sets named by the path matter]\label{rem:neg-local}
The proof uses no authorized set other than those appearing along $P$. Hence a
single inconsistent Clique Path rules out \emph{every} access structure
containing the sets it names: if $P$ uses the authorized sets
$S^{(1)},\dots,S^{(r)}$, then no $((n,\Gamma,k))$-QSS scheme exists for any monotone
$\Gamma$ with $S^{(1)},\dots,S^{(r)}\in\Gamma$. In particular, the obstruction is
not specific to threshold structures.
\end{remark}

\paragraph{Searching by color propagation.} In practice, we look for an
inconsistent path by propagating colors and eliminating. Fix a starting
labeled clique $\vec A^{(0)}$ and color its sets $1,\dots,k$. Then repeat:
 
\begin{quote}
\textbf{Propagation step.} Let $\vec B$ be a labeled clique already reached,
let $i\in[k]$, and let $v$ be any authorized set disjoint from $B_j$ for all
$j\neq i$. Substituting $v$ at index $i$ reaches a new labeled clique. We record
color $i$ for $v$.
\end{quote}
 
Read as a coloring rule: a vertex of $G_\Gamma$ adjacent to the $k-1$ sets of
an already-reached clique other than the one at index $i$ is forced to take the
remaining color $i$. If some vertex ever receives two colors, the path
traced to it is inconsistent and Theorem~\ref{thm:neg-criterion} applies. Note
the $k-1$ neighbors must be $k-1$ sets of a \emph{single} already-reached
clique, not merely $k-1$ colored sets found anywhere in the graph.

\begin{remark}[Recovering the no-cloning bound]
For $k \geq 2$, the obstruction subsumes the no-cloning obstruction of
Theorem~\ref{thm:qss-1}. Suppose $G_\Gamma$ contains a $(k+1)$-clique
$A_1,\dots,A_{k+1}$. Starting from $(A_1,\dots,A_k)$, substitute $A_{k+1}$ at
index $1$ and then $A_1$ at index $2$; both substitutions are legal, since all
$k+1$ sets are pairwise disjoint. The resulting Clique Path colors $A_1$ with
both $1$ and $2$, so no scheme exists. Thus $k+1$ pairwise disjoint authorized
sets give an inconsistent Clique Path.
\end{remark}

\begin{remark}[The obstruction requires $\chi(G_\Gamma)>k$]\label{rem:neg-soundness}
The criterion never contradicts the positive results. Suppose $G_\Gamma$ is
$k$-colorable and fix a proper coloring $\chi$. If $\chi(G_\Gamma)<k$ then
$G_\Gamma$ has no $k$-clique and there is nothing to prove, so assume exactly
$k$ colors are used. The $k$ sets of a labeled clique are pairwise disjoint,
hence pairwise adjacent, and so carry $k$ distinct $\chi$-colors, leaving none spare. A substitution at index $i$ installs an
authorized set disjoint from the other $k-1$, which must therefore avoid their
$k-1$ $\chi$-colors, and the only color remaining is that of the set it just
replaced. Thus the tuple $(\chi(A_1),\dots,\chi(A_k))$ is invariant along any
Clique Path, and no set receives two indices. So whenever a scheme exists by
the union-of-access-structures approach, no inconsistent Clique Path exists.
\end{remark}

\paragraph{The case $k=2$.} Here the criterion is exactly the odd-cycle
condition, and hence tight.

\begin{proposition}\label{prop:neg-k2}
Let $k=2$. There is an inconsistent Clique Path for $\Gamma$ if and only if
$G_\Gamma$ contains a cycle of odd length.
\end{proposition}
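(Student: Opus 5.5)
For $k=2$ a labeled clique is an ordered pair $(A_1,A_2)$ of adjacent vertices of $G_\Gamma$, i.e.\ an oriented edge. A substitution at index $1$ replaces $A_1$ by a neighbor $A'$ of $A_2$. A substitution at index $2$ replaces $A_2$ by a neighbor of $A_1$. So a Clique Path is a sequence of oriented edges in which consecutive edges share one endpoint, and that shared endpoint keeps its index. I will prove both directions by translating between Clique Paths and walks in $G_\Gamma$. Throughout, ``vertices'' means minimal authorized sets. This is legitimate because the substitution rule only needs disjointness, and any inconsistency involving non-minimal authorized sets can be handled the same way or reduced by restricting to vertices of $G_\Gamma$.

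\emph{($\Leftarrow$)} Let $v_0,v_1,\dots,v_{2r}$ be an odd cycle of length $2r+1$, with indices taken mod $2r+1$. Start with the labeled clique $(v_1,v_0)$, so $v_0$ has index $2$. Walk around the cycle by alternating substitutions at index $2$ and index $1$:
\[
(v_1,v_0)\to(v_1,v_2)\to(v_3,v_2)\to(v_3,v_4)\to\cdots .
\]
Each step is legal because consecutive cycle vertices are adjacent. In general $v_j$ sits at index $1$ when $j$ is odd and at index $2$ when $j$ is even. After $2r$ steps we reach $(v_{2r+1},v_{2r})=(v_0,v_{2r})$, which puts $v_0$ at index $1$. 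This is exactly the $((5,2,2))$ pattern from the overview. The path colors $v_0$ with both $1$ and $2$, so it is inconsistent.

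\emph{($\Rightarrow$)} Suppose $G_\Gamma$ has no odd cycle. By Theorem~\ref{thm:bipartite_odd} it is bipartite, i.e.\ $2$-colorable. Remark~\ref{rem:neg-soundness} then says every Clique Path is consistent. For a self-contained argument, fix a proper $2$-coloring $\chi$. Each labeled clique $(A_1,A_2)$ is an edge, so $\chi(A_1)\neq\chi(A_2)$. A substitution at index $i$ introduces a neighbor of the unchanged endpoint, so it inherits the color of the set it replaces. Hence the pair $(\chi(A_1),\chi(A_2))$ is invariant along the path. Any set appearing at index $i$ therefore has $\chi$-color $\chi(A^{(0)}_i)$, and these two colors differ, so no set can appear at two different indices.

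The main obstacle is a bookkeeping issue in ($\Rightarrow$). Inconsistent paths could in principle name non-minimal authorized sets, which are not vertices of $G_\Gamma$, while Remark~\ref{rem:neg-soundness} is phrased for vertices. I would handle this first by extending $\chi$ to every authorized set, coloring each with the color of some minimal subset it contains. Disjoint authorized sets contain disjoint minimal subsets, which are adjacent and so differently colored. So the invariance argument goes through unchanged. In ($\Leftarrow$), the only point to check is the parity: the cycle has odd length exactly when $v_0$ returns at the opposite index.
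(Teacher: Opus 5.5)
Your proof is correct. Your $(\Leftarrow)$ direction is the paper's argument made explicit. The paper turns a Clique Path into a walk in $G_\Gamma$ whose $j$-th vertex carries an index fixed by the parity of $j$, so traversing an odd cycle returns the start vertex at the opposite index; this is exactly your alternating construction.

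For $(\Rightarrow)$ you take a different route. The paper argues directly: an inconsistent path revisits some set after an odd number of steps, which gives an odd closed walk and hence an odd cycle. You instead prove the contrapositive via Theorem~\ref{thm:bipartite_odd} and the $2$-coloring invariant of Remark~\ref{rem:neg-soundness}.

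The paper's version is shorter, handles both directions with a single parity observation, and actually extracts an odd cycle from the inconsistent path. Yours does not exhibit the cycle, but it is more robust on two points:
\begin{itemize}
\item The paper's walk presumes that substitutions alternate between the two indices. Definition~\ref{def:neg-cliquepath} does not force this: two consecutive substitutions at index $1$ are legal, and one must repeat the unchanged endpoint to keep a walk. Your invariant does not care about the order of substitutions.
\item The paper identifies labeled $2$-cliques with oriented edges of $G_\Gamma$, silently ignoring that Definition~\ref{def:neg-clique} allows non-minimal authorized sets. Your device of coloring each authorized set by a chosen minimal subset handles that case cleanly.
\end{itemize}
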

 
\begin{proof}
A labeled $2$-clique is an ordered edge of $G_\Gamma$, and a substitution replaces one endpoint by a new neighbor of the other. Listing vertices in the order they enter gives a walk $w_0, w_1, \dots $ whose consecutive pairs are the cliques of the path. Substitutions alternate between the two indices, so $w_j$'s index depends only on the parity of $j$. A set therefore receives two colors exactly when it occupies positions of opposite parity, i.e. when the walk revisits it after an odd number of steps. A graph has an odd closed walk iff it has an odd cycle.
\end{proof}
 
Combining Proposition~\ref{prop:neg-k2} with Theorem~\ref{thm:neg-criterion} and
the positive result of Section~\ref{sec:positive}, we obtain a complete characterization for two
copies, for arbitrary monotone access structures.
 
\begin{theorem}\label{thm:neg-k2-tight}
For any monotone $\Gamma$, an $((n,\Gamma,2))$-QSS scheme exists if and only if
$G_\Gamma$ is bipartite.
\end{theorem}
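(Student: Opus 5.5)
The plan is to prove the two directions separately, each by combining results already established; no new quantum argument is needed.

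For the ``if'' direction, assume $G_\Gamma$ is bipartite. Theorem~\ref{thm:k2-positive} (equivalently, Theorem~\ref{thm:acc_comp_k_colorable} with $k=2$) then supplies a $((n,\Gamma,2))$-QSS scheme via the union-of-access-structures approach. The proper $2$-coloring splits the minimal authorized sets into two classes, each free of disjoint pairs. Each class therefore has a single-copy scheme by Theorem~\ref{thm:qss-1}. Running the two single-copy schemes in parallel, one on each copy, yields correctness and security, as argued there. One small point to check is that every authorized set (not only the minimal ones) contains a minimal set, and hence lies in $\Gamma_1$ or $\Gamma_2$. This holds by monotonicity, since the $\Gamma_i$ are taken to be the monotone closures of the color classes.

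For the ``only if'' direction, argue by contrapositive: suppose $G_\Gamma$ is not bipartite.
\begin{enumerate}
    \item By Theorem~\ref{thm:bipartite_odd}, $G_\Gamma$ contains an odd cycle.
    \item By Proposition~\ref{prop:neg-k2}, this odd cycle yields an inconsistent Clique Path for $\Gamma$ with $k=2$.
    \item By Theorem~\ref{thm:neg-criterion}, which requires $k\ge 2$ and is satisfied here, no $((n,\Gamma,2))$-QSS scheme exists.
\end{enumerate}

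The only step that deserves care is step 2, checking that Proposition~\ref{prop:neg-k2} really produces a legal Clique Path from an odd cycle $v_0,v_1,\dots,v_{2r},v_0$. I would make this explicit. Start from the labeled clique $(v_0,v_1)$ and substitute alternately at indices $1$ and $2$:
\[
(v_0,v_1)\to(v_2,v_1)\to(v_2,v_3)\to\cdots\to(v_{2r},v_{2r-1})\to(v_{2r},v_0).
\]
Each substitution is legal because consecutive cycle vertices are adjacent, that is, disjoint minimal authorized sets. In the final clique $v_0$ sits at index $2$, while initially it sat at index $1$, so the path is inconsistent.

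A cycle has length at least $3$, so the vertices involved are genuine distinct sets, and the labeled cliques are well defined. Since the vertices of $G_\Gamma$ are minimal authorized sets, they belong to $\Gamma$ as required by Definition~\ref{def:neg-clique}. With this explicit path in hand, the two directions combine to give the stated equivalence.
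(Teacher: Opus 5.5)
Your proposal is correct and follows the paper's own argument: the ``if'' direction is Theorem~\ref{thm:k2-positive}, and the ``only if'' direction chains Theorem~\ref{thm:bipartite_odd}, Proposition~\ref{prop:neg-k2}, and Theorem~\ref{thm:neg-criterion}. Your explicit alternating-substitution path around the odd cycle is a concrete instance of the walk argument in the proof of Proposition~\ref{prop:neg-k2}, and each substitution in it is legal.
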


\subsection{The Clique Path Obstruction Applied to Threshold Structures}
 
We now apply the obstruction to QTSS---the case $\Gamma_{t,n}$ with $n=kt+1$. Here the
substitutions take a particularly simple form.

\begin{lemma}\label{lem:neg-sliding}
Let $n=kt+1$. Any labeled $k$-clique of $G_{\Gamma_{t,n}}$ can be reached from any
other by a Clique Path.
\end{lemma}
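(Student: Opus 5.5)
With $n=kt+1$, a labeled $k$-clique $(A_1,\dots,A_k)$ consists of $k$ pairwise disjoint $t$-subsets of $[n]$. They cover $kt$ users, so exactly one user $u$ is left over. The plan is to show that any such clique can be transformed into any other by substitutions. The only substitutions available are the \emph{elementary swaps}: pick $x\in A_i$ and replace $A_i$ by $A_i\setminus\{x\}\cup\{u\}$. The result is again a $t$-set disjoint from all $A_j$ with $j\neq i$, so this is a legal substitution at index $i$, and the new free user is $x$. So the state of a labeled clique is the ordered partition $(A_1,\dots,A_k,\{u\})$ of $[n]$, and a swap exchanges the free element with any element of any block.

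The key step is to show that these swaps let us realize an arbitrary transposition of two elements lying in different blocks, the free slot included. Let $a\in A_i$ and $b\in A_j$ with $i\neq j$, and let $u$ be free. Perform three swaps in order:
\begin{enumerate}
\item swap $u$ with $a$, so $u$ goes into $A_i$ and $a$ becomes free;
\item swap $a$ with $b$, so $a$ goes into $A_j$ and $b$ becomes free;
\item swap $b$ with $u$ (now in $A_i$), so $b$ goes into $A_i$ and $u$ becomes free again.
\end{enumerate}
The net effect exchanges $a$ and $b$ between blocks $i$ and $j$ and leaves everything else fixed. The case where one of the two elements is the free one is a single swap. Swapping two elements of the same block does nothing, since the sets are unordered.

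Next, observe that ordered partitions of $[n]$ into blocks of prescribed sizes $(t,\dots,t,1)$ form a single orbit of the symmetric group $S_n$ acting on elements. Every permutation is a product of transpositions, and each transposition $(a\,b)$ acts on a given partition either trivially (when $a,b$ share a block) or as an exchange between two different blocks. By the previous paragraph every such exchange is realizable by a Clique Path. So, given a target clique $\vec A'$ with free user $u'$, choose a permutation $\sigma$ mapping the partition of $\vec A$ to that of $\vec A'$. Write $\sigma$ as transpositions, and realize each one in turn by the three-swap gadget, skipping the trivial ones. Concatenating these paths gives a path from $\vec A$ to $\vec A'$.

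The main obstacle is the bookkeeping in the last step: each transposition must be applied to the \emph{current} partition, and each intermediate object must be a legal labeled clique. Legality is automatic, because every elementary swap yields pairwise disjoint $t$-sets at the original indices. To avoid subtleties with composition order, one can instead argue greedily. First make the free user equal to $u'$ with one swap. Then, for $i=1,\dots,k$ in turn, while $A_i\neq A'_i$ pick $a\in A_i\setminus A'_i$ and $b\in A'_i\setminus A_i$. Such $b$ lies in some block $A_j$ with $j>i$, because earlier blocks are already correct and the free slot is already $u'$. Exchange $a$ and $b$ with the gadget. Each exchange strictly increases $|A_i\cap A'_i|$ and leaves blocks $1,\dots,i-1$ and the free slot untouched, so the process terminates at $\vec A'$.
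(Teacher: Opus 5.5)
Your proof is correct. It starts from the same observation as the paper: for $n=kt+1$ a labeled clique is an ordered partition of $[n]$ into blocks of sizes $t,\dots,t,1$, and a nontrivial substitution exchanges the free user with a member of one block.

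The completion differs. The paper chases the hole. It counts \emph{misplaced} users, meaning users whose current index differs from the target index, with the hole counted as an extra index. It sends a misplaced occupant of the hole to its target block while ejecting a misplaced member of that block. When the hole's occupant is already correct, it re-seeds by pulling a misplaced user into the hole. You instead build a three-substitution gadget that realizes an arbitrary transposition between two blocks while restoring the free user. This reduces reachability to the transitivity of $S_n$ on such partitions, or, in your greedy version, to the potential $|A_i\cap A'_i|$ with earlier blocks and the free slot frozen.

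Your route is more modular, and its termination is immediate. In the paper's argument the misplaced count rises by one at each re-seeding, so ``strictly decreases'' holds only for moves of the first kind. One must also note that each run of such moves ends by returning the target-hole user to the hole, so each re-seed plus its run nets a decrease. The paper's route gives shorter paths, since each move of the first kind permanently places a user, whereas your gadget spends three substitutions per transposition. For the lemma only the existence of a path matters, so either suffices.
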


\begin{proof}
    A labeled $k$-clique consists of $k$ pairwise disjoint $t$-subsets of $[kt+1]$, occupying $kt$ users and leaving exactly one user free. Call that user the \emph{hole}. A substitution at index $i$ must install a $t$-set disjoint from the other $k-1$ sets, and the only users available to it are those of $A_i$ itself together with the hole. So every substitution simply exchanges the hole with one member of one set, and any user can be moved into any set this way, one exchange at a time.
     
    Now fix a target labeled $k$-clique. Call a user \emph{misplaced} if its
    index in the current clique differs from its index in the target, counting
    the hole as an index of its own, and let $h$ denote the user currently in
    the hole. If $h$ is misplaced, with target index $i$, exchange the hole
    with a misplaced member of $A_i$; such a member exists, since $A_i$ is full
    and does not contain $h$. This places $h$ correctly and moves into the hole
    a user that was already misplaced, so the number of misplaced users
    strictly decreases. If instead $h$ is correctly placed but some user is
    misplaced, one exchange moves that user into the hole and returns us to the
    previous case. Iterating, the number of misplaced users eventually reaches zero, at
    which point the current clique equals the target.
\end{proof}

Figure~\ref{fig:723} shows an inconsistent Clique Path for $t=2$, $k=3$, $n=7=kt+1$.

\begin{theorem}\label{thm:neg-allbig}
No $((n,t,k))$-QTSS scheme exists for any $n\ge kt+1$ and $k\ge2$.
\end{theorem}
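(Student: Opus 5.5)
Fix $k\ge2$ and $n\ge kt+1$. The plan is to reduce to the case $n=kt+1$ and then use Lemma~\ref{lem:neg-sliding} to exhibit an inconsistent Clique Path, after which Theorem~\ref{thm:neg-criterion} finishes the proof.

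\emph{Reduction.} Let $\Gamma'$ be the threshold structure on the first $kt+1$ users, with minimal sets the $t$-subsets of $[kt+1]$. Every such set is also a minimal authorized set of $\Gamma_{t,n}$. By Remark~\ref{rem:neg-local}, an inconsistent Clique Path built only from $t$-subsets of $[kt+1]$ also rules out any monotone $\Gamma$ containing those sets, and in particular rules out $\Gamma_{t,n}$. The path's substitutions remain legal in $G_{\Gamma_{t,n}}$, because disjointness does not depend on the ambient $n$. So it suffices to treat $n=kt+1$.

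\emph{Inconsistent path.} Take any labeled $k$-clique $\vec A=(A_1,A_2,\dots,A_k)$ of pairwise disjoint $t$-subsets of $[kt+1]$; one exists since $kt\le n$. Let $\vec A'=(A_2,A_1,A_3,\dots,A_k)$, which swaps the first two entries. This is again a labeled $k$-clique. By Lemma~\ref{lem:neg-sliding}, there is a Clique Path from $\vec A$ to $\vec A'$. Along it, $A_1$ is colored $1$ in the first clique and $2$ in the last. Since $k\ge2$, these are distinct indices, so the path is inconsistent in the sense of Definition~\ref{def:neg-inconsistent}. Theorem~\ref{thm:neg-criterion} then gives that no $((kt+1,t,k))$-QTSS scheme exists, and by the reduction none exists for any $n\ge kt+1$.

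\emph{Main obstacle.} The substantive step is the connectivity in Lemma~\ref{lem:neg-sliding}, which we may assume. The remaining point needing care is checking that Lemma~\ref{lem:neg-sliding} really connects arbitrary labeled cliques, including ones whose entries are permuted relative to each other. Its misplaced-user argument handles this, since it tracks each user's target index and so allows moving a whole set from index $1$ to index $2$, using the hole as a buffer. The other point is to state clearly that the reduction to $[kt+1]$ relies on Remark~\ref{rem:neg-local}, i.e.\ that the obstruction uses only the sets named along the path.
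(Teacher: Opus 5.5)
Your proof is correct. Its core step is exactly the paper's: use Lemma~\ref{lem:neg-sliding} to connect $\vec A$ to $(A_2,A_1,A_3,\dots,A_k)$, so that $A_1$ receives indices $1$ and $2$.

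You differ from the paper in how you pass from $n=kt+1$ to $n>kt+1$. The paper works at the level of schemes. It restricts a hypothetical $((n,t,k))$ scheme to users $1,\dots,kt+1$ by absorbing the remaining shares into $\mathcal H_E$. It then checks that correctness and security survive, and contradicts the $n=kt+1$ case.

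You instead note that the inconsistent Clique Path built from $t$-subsets of $[kt+1]$ is already an inconsistent Clique Path for $\Gamma_{t,n}$. Those sets are authorized in $\Gamma_{t,n}$, and disjointness does not depend on the ambient $n$. So Theorem~\ref{thm:neg-criterion} applies to $\Gamma_{t,n}$ verbatim; Remark~\ref{rem:neg-local} is just a convenient way to say this.

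Each route has its advantage. Yours is shorter and avoids the security check, which is superfluous because the obstruction never uses security. The paper's is a generic monotonicity-in-$n$ argument for threshold schemes, so it would transfer any impossibility at $n=kt+1$ to larger $n$, however that impossibility was proved.
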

\begin{proof}
    First we will show that the claim holds for $n = kt+1$. Since $kt\le kt+1$, labeled $k$-cliques exist in $G_{\Gamma_{t,kt+1}}$. Take any
    $\vec A=(A_1,A_2,\dots,A_k)$. By Lemma~\ref{lem:neg-sliding} there is a Clique
    Path from $\vec A$ to $(A_2,A_1,A_3,\dots,A_k)$, which colors $A_1$ with both
    $1$ and $2$. The path is inconsistent, so Theorem~\ref{thm:neg-criterion}
    applies and we conclude that no $((kt+1,t,k))$-QTSS scheme exists.
    
    Now suppose $n > kt+1$, and assume for contradiction that a $((n,t,k))$-QTSS scheme
exists. Define a scheme for the $kt+1$ users $1,\dots,kt+1$ by running the
original dealer and absorbing the shares of users $kt+2,\dots,n$ into
$\mathcal H_E$. Correctness holds, since any $t$ of the remaining users
form an authorized set of $\Gamma_{t,n}$. Security also holds, since any
$t-1$ of them form an unauthorized set of $\Gamma_{t,n}$, so their joint state
is independent of $\ket\psi$. This is a $((kt+1,t,k))$-QTSS scheme, contradicting the
first part of the proof.
\end{proof}

\subsection{What the Obstruction Does Not Decide}
 
For threshold structures, combining Theorem~\ref{thm:pos-bound} with Theorem~\ref{thm:neg-allbig}:
\[
\begin{array}{ll}
  n\le 2t-2+k          & \text{realizable (union of access structures);}\\
  2t-1+k\le n\le kt    & \text{undecided;}\\
  n\ge kt+1            & \text{unrealizable.}
\end{array}
\]
The middle range is nonempty exactly when $(k-2)(t-1)\ge1$; in particular the
bounds meet for every $t$ when $k=2$, consistent with
Theorem~\ref{thm:neg-k2-tight}, and the smallest undecided case is
$((6,2,3))$-QTSS.
 
Why does $((6,2,3))$ resist our method? First, we see that the
Clique Path obstruction does not apply here. With $n=6$, $t=2$, $k=3$, a
labeled $3$-clique is a perfect matching of $[6]$ into three pairs. There is no hole, and a substitution at index $i$ would have to install a $2$-subset of the two users freed by deleting $A_i$, which is $A_i$ again. No substitution is available at all, every Clique Path has length zero, and no set ever receives a second color. More generally, an inconsistent path requires both that $k$ disjoint authorized sets exist and that there be room to vary one of them, which for threshold structures means that $n\ge kt+1$. The boundary in Theorem~\ref{thm:neg-allbig} is exactly the point at which substitution becomes possible.
 
Second, $G_{\Gamma_{2,6}}=K(6,2)$ has $\chi=6-4+2=4>3$, so no scheme of the
union-of-access-structures type exists. However, a failure of $k$-colorability does not produce an inconsistent Clique Path. The obstruction requires substitutions to be available, and there are none here. Closing the gap therefore requires either a construction that is not based on the union of access structures, or an impossibility argument that does not proceed by substituting one authorized set for another.

\bibliographystyle{alpha}
\bibliography{bib}

\end{document}